\newcommand{\mcsp}{\mathtt{mcsp}}
\newcommand{\masp}{\mathtt{masp}}
\begin{document}

\title[Fixed parameter polynomial time algorithms for MASP and MCSP]
{Fixed Parameter Polynomial Time Algorithms for
Maximum Agreement and Compatible Supertrees}

\author[]{Hoang}{Viet Tung Hoang $^{1, 2}$}
\address[nus]{$^1$ Department of Computer Science, National University of Singapore}
\email{{hoangvi2,ksung}@comp.nus.edu.sg}

\author[]{Sung}{Wing-Kin Sung $^{1, 2}$}
\address[gis]{$^2$ Genome Institute of Singapore}

\keywords{maximum agreement supertree, maximum compatible supertree}
\subjclass{Algorithms, Biological computing}

\begin{abstract}
Consider a set of labels $L$ and a set of trees
${\mathcal T} = \{ {\mathcal T}^{(1)}, {\mathcal T}^{(2)}, \ldots, 
{\mathcal T}^{(k)} \}$ where 
each tree ${\mathcal T}^{(i)}$ is distinctly leaf-labeled by some subset of $L$.
One fundamental problem is to find the biggest tree (denoted as supertree)
to represent $\mathcal T$
which minimizes the disagreements with the trees in 
${\mathcal T}$ under certain criteria.
This problem finds applications in phylogenetics, database, and data mining.
In this paper, we focus on two particular supertree problems,
namely, the maximum agreement supertree problem (MASP)
and the maximum compatible supertree problem (MCSP).
These two problems are known to be NP-hard for $k \geq 3$.
This paper gives the first polynomial time algorithms
for both MASP and MCSP when 
both $k$ and the maximum degree $D$ of the trees are constant.
\end{abstract}

\maketitle

\stacsheading{2008}{361-372}{Bordeaux}
\firstpageno{361}


\section{Introduction}\label{intro}

Given a set of labels $L$ and
a set of unordered trees
${\mathcal T} = \{ {\mathcal T}^{(1)}, \ldots, {\mathcal T}^{(k)} \}$ 
where each tree ${\mathcal T}^{(i)}$ is distinctly leaf-labeled by some subset of $L$.
The supertree method tries to find a tree to represent all trees in
${\mathcal T}$ which minimizes the possible conflicts in the input trees.
The supertree method finds applications in 
phylogenetics, database, and data mining.
For instance, in the Tree of Life project~\cite{TOL},
the supertree method is the basic tool to infer 
the phylogenetic tree of all species.

Many supertree methods have been proposed in the literature
\cite{BN04,Gordon86,Guillemot07,Jansson05}.
This paper focuses on two particular supertree methods, namely
the Maximum Agreement Supertree (MASP)~\cite{Jansson05} 
and the Maximum Compatible Supertree (MCSP)~\cite{BN04}.
Both methods try to find a consensus tree with the largest number of leaves
which can represent all the trees in $\mathcal T$ under certain criteria.
(Please read Section~\ref{sec_definition} for the formal definition.) 

MASP and MCSP are known to be NP-hard as they 
are the generalization of the Maximum Agreement Subtree problem (MAST)
\cite{AK97,FPT95,KLST01} and
the Maximum Compatible Subtree problem (MCT) \cite{HJWZ96,GW01} respectively.
Jansson et al. \cite{Jansson05} proved that MASP remains NP-hard even if
every tree is a rooted triplet, i.e., a binary tree of $3$ leaves.
For $k=2$, Jansson et al. \cite{Jansson05} and 
Berry and Nicolas \cite{BN04}
proposed a linear time algorithm to 
transform MASP and MCSP for $2$ input trees 
to MAST and MCT respectively. 
For $k \geq 3$, positive results for computing
MASP/MCSP are reported only for rooted binary trees.
Jansson et al. \cite{Jansson05} gave an $O \bigl( k(2n)^{3k^2} \bigr)$ time 
solution to this problem.
Recently, Guillemot and Berry \cite{Guillemot07} further improve
the running time to $O(8^k n^k)$.

In general, the trees in $\mathcal T$ may not be binary nor rooted.
Hence, Jansson et al. \cite{Jansson05} posted an open problem 
and asked if MASP can be solved in polynomial time when
$k$ and the maximum degree of the trees in $\mathcal T$ are constant. 
This paper gives an affirmative answer to this question.
We show that both MASP and MCSP can be solved in polynomial time
when $\mathcal T$ contains constant number of bounded degree trees.
For the special case where the trees in $\mathcal T$ are
rooted binary trees, we show that
both MASP and MCSP can be solved in $O( 6^k n^k)$ time, 
which improves the previous best result.
Table~\ref{table:result} summarizes the previous and new results.

The rest of the paper is organized as follows. 
Section~\ref{sec_definition} gives the formal definition of the problems.
Then, Sections~\ref{mcsp} and \ref{mcsp_unroot} describe
the algorithms for solving MCSP for both rooted and unrooted cases.
Finally, Sections~\ref{masp} and \ref{masp_unroot} detail
the algorithms for solving MASP for both rooted and unrooted cases.
Proofs omitted due to space limitation will appear in the full
version of this paper.

\begin{table}
\begin{center}
\begin{tabular}{|l|lc|lc|} \hline
 & \bf Rooted & & \bf Unrooted & \\ \hline
MASP for $k$ trees of max degree $D$ & $O((kD)^{kD+3} (2n)^k)$ \phantom{\Large{I}} 
& $\dag$ & $O((kD)^{kD+3} (4n)^k)$ & $\dag$ \\ \hline
MCSP for $k$ trees of max degree $D$ & $O(2^{2kD} n^k)$ \phantom{\Large{I}} 
& $\dag$ & $O(2^{2kD} n^k)$ & $\dag$ \\ \hline
\multirow{3}{*}{MASP/MCSP for $k$ binary trees}
& $O\bigl(k (2n^2)^{3k^2}\bigr)$ \phantom{\LARGE{I}} & \cite{Jansson05} & & \\
& $O(8^k n^k)$ & \cite{Guillemot07} & & \\
& $O(6^k n^k)$ & $\dag$  & & \\ \hline
\end{tabular}
\end{center}
\caption{Summary of previous and new results ($\dag$ stands for new result).}
\label{table:result}
\end{table}

\section{Preliminary} \label{sec_definition}

A \emph{phylogenetic tree} is defined as 
an unordered  and distinctly leaf-labeled tree.
Given a phylogenetic tree $T$, the notation $L(T)$ 
denotes the leaf set of $T$, and the \emph{size} of $T$ 
refers to $|L(T)|$. 
For any label set $S$,  the \emph{restriction} of $T$
to $S$, denoted $T|S$, is a phylogenetic tree obtained from $T$ by 
removing all leaves in $L(T) - S$
and then suppressing all internal nodes of degree two. 
(See Figure~\ref{fig:refine_example} for an example of \emph{restriction}.)
For two phylogenetic trees $T$ and $T'$, we say that $T$ \emph{refines} $T'$, 
denoted $T \unrhd T'$, if $T'$ can be obtained by contracting
some edges of $T$. (See Figure~\ref{fig:refine_example} for an example
of \emph{refinement}.)

\begin{figure}[htbp]
	\centering
		\includegraphics[width=0.75\textwidth]{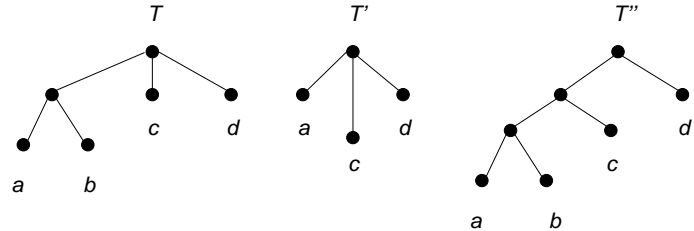}
	\caption{Three rooted trees. A tree $T$, a tree $T'$ 
	such that $T' = T \mid \{a, c, d\}$, and 
	a tree $T''$ such that $T'' \unrhd T$.}
	\label{fig:refine_example}
\end{figure}

\textbf{Maximum Compatible Supertree Problem}:
Consider a set of $k$ phylogenetic trees
${\mathcal T} = \{ {\mathcal T}^{(1)}, \ldots, {\mathcal T}^{(k)} \}$.
A \emph{compatible supertree}
of ${\mathcal T}$ is a tree $Y$
such that $Y|L({\mathcal T}^{(i)}) \unrhd {\mathcal T}^{(i)}|L(Y)$ for all $i \leq k$. 
The Maximum Compatible Supertree Problem (MCSP) is
to find a compatible supertree
with as many leaves as possible.
Figure~\ref{fig:supertree_example} shows an example 
of a compatible supertree
$Y$ of two rooted phylogenetic trees ${\mathcal T}^{(1)}$ and ${\mathcal T}^{(2)}$.
If all input trees have the same leaf sets, 
MCSP is referred as 
Maximum Compatible Subtree Problem (MCT).

\textbf{Maximum Agreement Supertree Problem:}
Consider a set of $k$ phylogenetic trees
${\mathcal T} = \{ {\mathcal T}^{(1)}, \ldots, {\mathcal T}^{(k)} \}$.
An \emph{agreement supertree}
of ${\mathcal T}$ is a tree $X$
such that $X|L({\mathcal T}^{(i)}) = {\mathcal T}^{(i)}|L(X)$ for all $i \leq k$.
The Maximum Agreement Supertree Problem (MASP) 
is to find an agreement supertree
with as many leaves as possible.
Figure~\ref{fig:supertree_example} shows an example 
of an agreement supertree $X$
of two rooted phylogenetic trees ${\mathcal T}^{(1)}$ and ${\mathcal T}^{(2)}$.
If all input trees have the same leaf sets, 
MASP is referred as 
Maximum Agreement Subtree Problem (MAST).

\begin{figure}[htbp]
	\centering
		\includegraphics[width=0.75\textwidth]{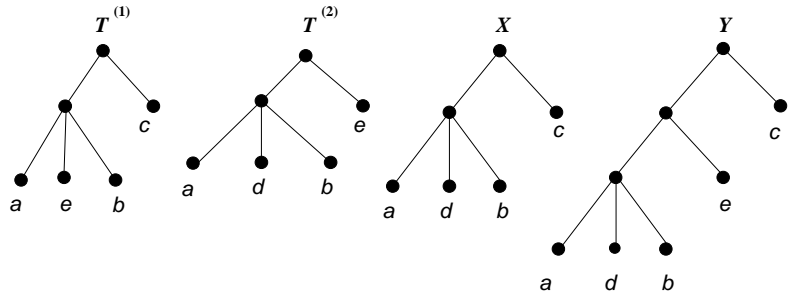}
	\label{fig:supertree_example}
	\caption{An agreement supertree $X$ and	a compatible supertree $Y$ of 
	$2$ rooted phylogenetic trees ${\mathcal T}^{(1)}$ 	and ${\mathcal T}^{(2)}$.}
\end{figure}

In the following discussion, for the set of phylogenetic trees
${\mathcal T} = \{ {\mathcal T}^{(1)}, \ldots, {\mathcal T}^{(k)} \}$, 
we denote $n = |\bigcup_{i=1..k} L({\mathcal T}^{(i)})|$, 
and $D$ stands for the maximum degree of the trees in ${\mathcal T}$.
We assume that none of the trees in $\mathcal T$ has an internal node of degree two, 
so that each tree contains at most $n - 1$ internal nodes. 
(If a tree ${\mathcal T}^{(i)}$ has some internal nodes
of degree two, we can replace it by ${\mathcal T}^{(i)} \mid L({\mathcal T}^{(i)})$ 
in linear time.)

\section{Algorithm for MCSP of rooted trees}\label{mcsp}

Let $\mathcal T$ be a set of $k$
rooted phylogenetic trees. 
This section presents a dynamic programming algorithm 
to compute the size of a maximum compatible supertree 
of $\mathcal T$ in  $O\left( 2^{2k D}n^k \right)$ time. 
The maximum compatible supertree can be obtained in the 
same asymptotic time bound by backtracking.

For every compatible supertree $Y$ of $\mathcal T$, 
there exists a binary tree that refines $Y$. 
This binary tree is also a compatible supertree of $\mathcal T$, 
and is of the same size as $Y$. 
Hence in this section, every compatible supertree is implicitly
assumed to be binary. 

\begin{definition}[Cut-subtree]
A \emph{cut-subtree} of a tree $T$ is either
an empty tree or 
a tree obtained by first selecting some subtrees attached to
the same internal node in $T$
and then connecting those subtrees by a common root.
\label{D:mcsp_cut-subtree}
\end{definition}

\begin{definition}[Cut-subforest]
Given a set of $k$ rooted (or unrooted) trees $\mathcal T$,
a \emph{cut-subforest} of $\mathcal T$ is a set 
${\mathcal A} = \{ {\mathcal A}^{(1)},  \ldots ,{\mathcal A}^{(k)} \}$, where 
${{\mathcal A}}^{(i)}$ is a cut-subtree of ${\mathcal T}^{(i)}$
and at least one element of ${\mathcal A}$ is not an empty tree. 
\label{D:mcsp_cut-subforest}
\end{definition}

\begin{figure}[htbp]
	\centering
		\includegraphics[width=0.7\textwidth]{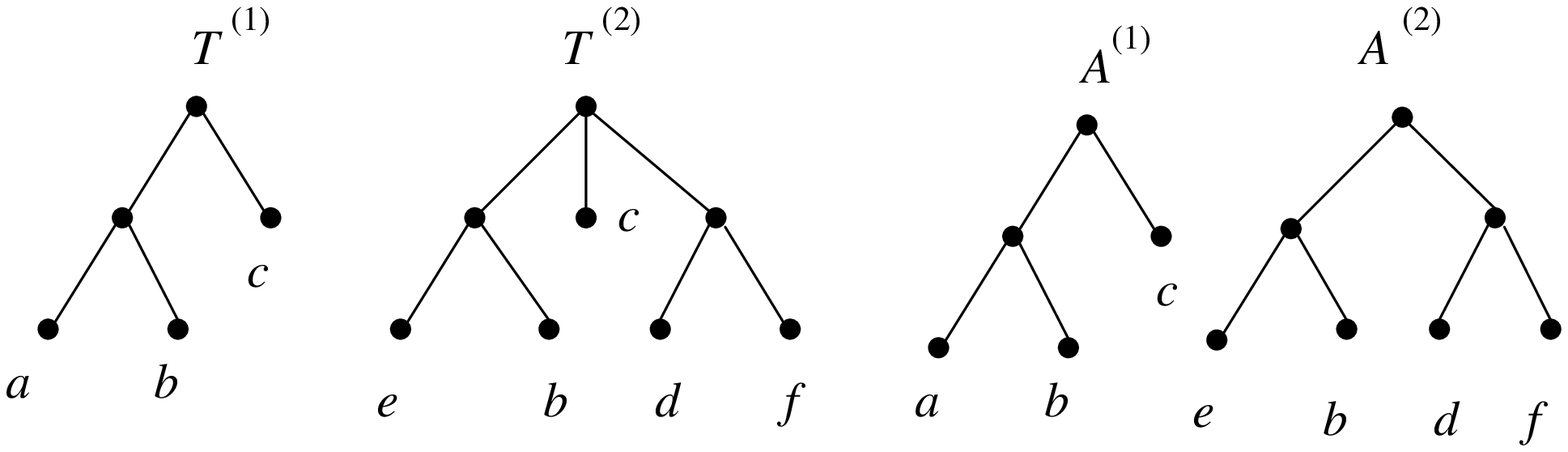}		
	\caption{A cut-subforest $\mathcal A$ of $\mathcal T$.}
	\label{fig:cut_subtree_example}
\end{figure}

For example, in Figure~\ref{fig:cut_subtree_example},
$\{ {\mathcal A}^{(1)}, {\mathcal A}^{(2)} \}$ is 
a cut-subforest of $\{ {\mathcal T}^{(1)}, {\mathcal T}^{(2)} \}$.
Let $\mathcal O$ denote the set of all possible cut-subforests of $\mathcal T$.

\begin{lemma}
There are $O\left(2^{k D}n^k\right)$ different cut-subforests of $\mathcal T$.
\label{L:mcsp_bound_position}
\end{lemma}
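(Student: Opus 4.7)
The plan is to bound the number of cut-subtrees of a single input tree, and then raise the bound to the $k$-th power by forming Cartesian products. First I would make the encoding of a cut-subtree explicit: by Definition~\ref{D:mcsp_cut-subtree}, every non-empty cut-subtree of a single tree $\mathcal{T}^{(i)}$ is uniquely determined by the choice of an internal node $v$ of $\mathcal{T}^{(i)}$ together with a non-empty subset $S$ of the subtrees attached to $v$. Conversely, every such pair $(v, S)$ gives rise to exactly one cut-subtree, so it suffices to count these pairs plus one for the empty cut-subtree.

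Next I would apply the two hypotheses recorded at the end of Section~\ref{sec_definition}. Because no internal node of $\mathcal{T}^{(i)}$ has degree two, $\mathcal{T}^{(i)}$ contains at most $n - 1$ internal nodes. Because the maximum degree is $D$, each such node has at most $D$ attached subtrees, yielding at most $2^D - 1$ non-empty subsets $S$. Hence the total number of cut-subtrees of $\mathcal{T}^{(i)}$ is at most $1 + (n - 1)(2^D - 1) = O(n \cdot 2^D)$.

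Finally, by Definition~\ref{D:mcsp_cut-subforest}, a cut-subforest is a $k$-tuple $(\mathcal{A}^{(1)}, \ldots, \mathcal{A}^{(k)})$ with one cut-subtree drawn from each input tree, so the total number of cut-subforests is at most $\prod_{i=1}^{k} O(n \cdot 2^D) = O(n^k \cdot 2^{kD})$; excluding the single forbidden all-empty tuple does not affect the asymptotic bound. There is no significant obstacle to this argument; it is pure bookkeeping once the $(v, S)$ encoding is identified. The only point needing care is that leaves (having no attached subtrees) contribute nothing to the enumeration, which is why the bound on \emph{internal} nodes, rather than on all nodes, is the relevant one.
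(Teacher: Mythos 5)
Your proof is correct and follows essentially the same route as the paper: bound the cut-subtrees of a single tree by $O(2^D n)$ via the (internal node, subset of attached subtrees) encoding, using the no-degree-two assumption to cap the internal nodes at $n-1$, and then take the $k$-fold product. The only cosmetic difference is that you count non-empty subsets separately, giving $1+(n-1)(2^D-1)$ instead of the paper's $1+(n-1)2^D$, which changes nothing asymptotically.
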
 
\begin{proof}
We claim that each tree ${\mathcal T}^{(i)}$ contributes 
$2^D n$ or fewer cut-subtrees; therefore there are 
$O\left(2^{k D}n^k\right)$ cut-subforests of $\mathcal T$.
At each internal node $v$ of ${\mathcal T}^{(i)}$, 
since the degree of $v$ does not exceed $D$, we have 
at most $2^D$ ways of selecting the subtrees attached to $v$ 
to form a cut-subtree. Including the empty tree, 
the number of cut-subtrees in ${\mathcal T}^{(i)}$ cannot go beyond
$(n - 1)2^D + 1 < 2^D n$.
\end{proof}

Figure~\ref{fig:mcsp_substructure_example} demonstrates
that a compatible supertree of some cut-subforest ${\mathcal A}$ of $\mathcal T$
may not be a compatible supertree of $\mathcal T$.
To circumvent this irregularity, we define \emph{embedded supertree} as follows.

\begin{definition}[Embedded supertree]
For any cut-subforest ${\mathcal A}$ of $\mathcal T$,
a tree $Y$ is called an \emph{embedded supertree} of ${\mathcal A}$ if
$Y$ is a compatible supertree of ${\mathcal A}$, and
$L(Y) \cap L({\mathcal T}^{(i)}) \, \subseteq \, L({\mathcal A}^{(i)})$
for all $i \leq k$.
\label{D:mcsp_embed}
\end{definition}

\begin{figure}[htbp]
	\centering
		\includegraphics[width=1.0\textwidth]{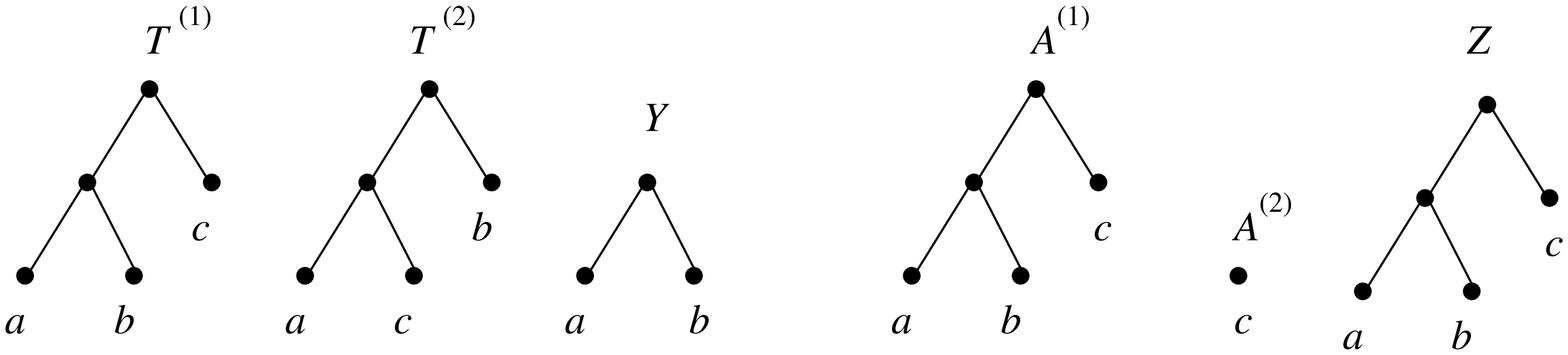}
	\caption{Consider ${\mathcal T} = \{{\mathcal T}^{(1)}, {\mathcal T}^{(2)} \}$
	and its cut-subforest ${\mathcal A} = \{ {\mathcal A}^{(1)}, {\mathcal A}^{(2)} \}$.
	Although $Z$ is a compatible supertree of $\mathcal A$, it is not 
	a compatible supertree of $\mathcal T$. The maximum compatible supertree of
	$\mathcal T$ is $Y$ that contains only $2$ leaves.}
	\label{fig:mcsp_substructure_example}
\end{figure}

Note that a compatible supertree of $\mathcal T$
is also an embedded supertree of $\mathcal T$. 
For each cut-subforest  ${\mathcal A}$ of $\mathcal T$, let $\mcsp({\mathcal A})$ 
denote the maximum size of embedded supertrees  of ${\mathcal A}$. 
Our aim is to compute $\mcsp({\mathcal T})$.
Below, we first define the recursive equation for computing
$\mcsp({\mathcal A})$ for all cut-subforests ${\mathcal A} \in {\mathcal O}$.
Then, we describe our dynamic programming algorithm.

We partition the cut-subforests in ${\mathcal O}$ into two classes.
A cut-subforest  ${\mathcal A}$ of $\mathcal T$ is \emph{terminal} if 
each element ${{\mathcal A}}^{(i)}$ is either an empty tree or 
a leaf of ${\mathcal T}^{(i)}$; it is called \emph{non-terminal}, otherwise.

For each terminal cut-subforest ${\mathcal A}$, 
let 
\begin{equation}
\Lambda({\mathcal A}) = \Bigl\{ l \in \bigcup_{j = 1..k} L( {\mathcal A}^{(j)}) 
\mid l \not\in \; L({\mathcal T}^{(i)}) - L({\mathcal A}^{(i)})
\mbox{ for } i = 1,2, \ldots, k \Bigr\} \enspace .
\label{eq:Lambda_function}
\end{equation}
For example,  with $\mathcal T$ in Figure~\ref{fig:supertree_example},
if ${\mathcal A}^{(1)}$ and ${\mathcal A}^{(2)}$ are leaves labeled by $a$
and $d$ respectively then $\Lambda({\mathcal A}) = \{ d \}$. 
In Lemma~\ref{L:mcsp_terminal}, we show that 
$\mcsp({\mathcal A}) = \left| \Lambda({\mathcal A}) \right|$. 

\begin{lemma}
If ${\mathcal A}$ is a terminal cut-subforest then 
$\mcsp({\mathcal A}) = \left| \Lambda({\mathcal A}) \right|$.
\label{L:mcsp_terminal}
\end{lemma}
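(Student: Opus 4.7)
The plan is to prove the equality via two matching bounds, using the fact that the compatibility requirement collapses to a triviality when every $\mathcal{A}^{(i)}$ has at most one leaf.

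\textbf{Upper bound} $\mcsp(\mathcal{A}) \leq |\Lambda(\mathcal{A})|$. I would take an arbitrary embedded supertree $Y$ of $\mathcal{A}$ and argue that $L(Y) \subseteq \Lambda(\mathcal{A})$. Fix any $l \in L(Y)$. Since $L(Y)$ is drawn from the global label set $\bigcup_i L(\mathcal{T}^{(i)})$, there exists at least one index $i_0$ with $l \in L(\mathcal{T}^{(i_0)})$. The embedding condition $L(Y) \cap L(\mathcal{T}^{(i_0)}) \subseteq L(\mathcal{A}^{(i_0)})$ then forces $l \in L(\mathcal{A}^{(i_0)}) \subseteq \bigcup_j L(\mathcal{A}^{(j)})$. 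Moreover, for every $i$ with $l \in L(\mathcal{T}^{(i)})$, the same embedding condition gives $l \in L(\mathcal{A}^{(i)})$, so $l \notin L(\mathcal{T}^{(i)}) - L(\mathcal{A}^{(i)})$ for any $i$. Both defining properties of $\Lambda(\mathcal{A})$ hold for $l$, hence $L(Y) \subseteq \Lambda(\mathcal{A})$ and in particular $|L(Y)| \leq |\Lambda(\mathcal{A})|$.

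\textbf{Lower bound} $\mcsp(\mathcal{A}) \geq |\Lambda(\mathcal{A})|$. Here I would exhibit an explicit embedded supertree with leaf set exactly $\Lambda(\mathcal{A})$. Choose any binary phylogenetic tree $Y$ on the label set $\Lambda(\mathcal{A})$ (for instance a caterpillar; if $\Lambda(\mathcal{A})$ is empty, take $Y$ to be the empty tree). I need to verify the two conditions in Definition~\ref{D:mcsp_embed}. The inclusion $L(Y) \cap L(\mathcal{T}^{(i)}) \subseteq L(\mathcal{A}^{(i)})$ follows directly from the defining property of $\Lambda(\mathcal{A})$: if $l \in L(Y) = \Lambda(\mathcal{A})$ and $l \in L(\mathcal{T}^{(i)})$, then $l \notin L(\mathcal{T}^{(i)}) - L(\mathcal{A}^{(i)})$, i.e., $l \in L(\mathcal{A}^{(i)})$. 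For the refinement condition $Y|L(\mathcal{A}^{(i)}) \unrhd \mathcal{A}^{(i)}|L(Y)$, the terminal hypothesis is what carries the weight: each $\mathcal{A}^{(i)}$ is either empty or a single leaf, so $\mathcal{A}^{(i)}|L(Y)$ is either empty or a single leaf, and is trivially refined by the matching tree $Y|L(\mathcal{A}^{(i)})$ (which has the identical leaf set and is therefore identical as a phylogenetic tree).

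The argument is essentially definition-chasing; the only thing to watch is that both directions genuinely use the terminal assumption in different places (the upper bound does not, but the lower bound would fail if $\mathcal{A}^{(i)}$ could be a larger subtree, because then picking an arbitrary topology for $Y$ on $\Lambda(\mathcal{A})$ would not respect the internal structure of $\mathcal{A}^{(i)}$). This is precisely why the lemma is limited to terminal cut-subforests and why non-terminal cut-subforests will require the recursive treatment developed in the sections that follow.
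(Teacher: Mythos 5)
Your proof is correct and follows essentially the same route as the paper's: the upper bound by showing every leaf of an embedded supertree must lie in $\Lambda({\mathcal A})$ via the embedding condition, and the lower bound by exhibiting a caterpillar on $\Lambda({\mathcal A})$ whose compatibility is trivial because each terminal ${\mathcal A}^{(i)}$ has at most one leaf. Your added remarks on exactly where the terminal hypothesis is needed are accurate but not a departure from the paper's argument.
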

\begin{proof}
Consider any embedded supertree $Y$ of ${\mathcal A}$. 
By Definition~\ref{D:mcsp_embed}, 
every leaf of $Y$ belongs to $\Lambda({\mathcal A})$. 
Hence the value $\mcsp({\mathcal A})$ does not exceed $\left| \Lambda({\mathcal A}) \right|$. 

It remains to give an example of some embedded supertree of ${\mathcal A}$
whose leaf set is $\Lambda({\mathcal A})$. Let $C$ be a rooted caterpillar
\footnote{A rooted caterpillar is a rooted, unordered, and 
distinctly leaf-labeled binary tree where every internal
node has at least one child that is a leaf.}
whose leaf set is $\Lambda({\mathcal A})$. The definition of $\Lambda({\mathcal A})$
implies that $L(C) \cap L({\mathcal T}^{(i)}) \,
\subseteq \, L\bigl({\mathcal A}^{(i)} \bigr)$ for every $i \leq k$. 
Since each ${\mathcal A}^{(i)}$
has at most one leaf, it is straightforward 
that $C$ is a compatible supertree of ${\mathcal A}$. 
Hence $C$ is the desired example. 
\end{proof}

\begin{definition}[Bipartite]
Let ${\mathcal A}$ be a cut-subforest  of $\mathcal T$.
We say that the cut-subforests 
${\mathcal A}_L$ and ${\mathcal A}_R$ \emph{bipartition}
${\mathcal A}$ if for every $i \leq k$, 
the trees ${\mathcal A}_L^{(i)}$ and  ${\mathcal A}_R^{(i)}$ can be obtained by
$(1)$ partitioning the subtrees attached to the root of ${\mathcal A}^{(i)}$
into two sets $S^{(i)}_L$ and $S^{(i)}_R$; and
$(2)$ connecting the subtrees in $S^{(i)}_L$ (resp. $S^{(i)}_R$) 
by a common root to form ${\mathcal A}_L^{(i)}$ (resp. ${\mathcal A}_R^{(i)}$).
\label{D:mcsp_bipartite}
\end{definition}

Figure~\ref{fig:mcsp_bipartite_example} shows an example of 
the preceding definition.
For each non-terminal cut-subforest $\mathcal A$, we compute $\mcsp({\mathcal A})$ 
based on the mcsp values of ${\mathcal A}_L$ and ${\mathcal A}_R$
for each bipartite $({\mathcal A}_L, {\mathcal A}_R)$ of ${\mathcal A}$. 
More precisely, we prove that
\begin{equation}
\mcsp({\mathcal A}) = \max \{ \mcsp({\mathcal A}_L) + \mcsp({\mathcal A}_R) \mid 
{\mathcal A}_L \mbox{ and } {\mathcal A}_R \mbox{ bipartition } {\mathcal A} \} \enspace .
\label{eq:mcsp_recurrent_formula}
\end{equation}
The identity \eqref{eq:mcsp_recurrent_formula} is then established by 
Lemmas~\ref{L:mcsp_sufficient} and \ref{L:mcsp_necessary}. 

\begin{figure}[htbp]
	\centering
		\includegraphics[width=0.85\textwidth]{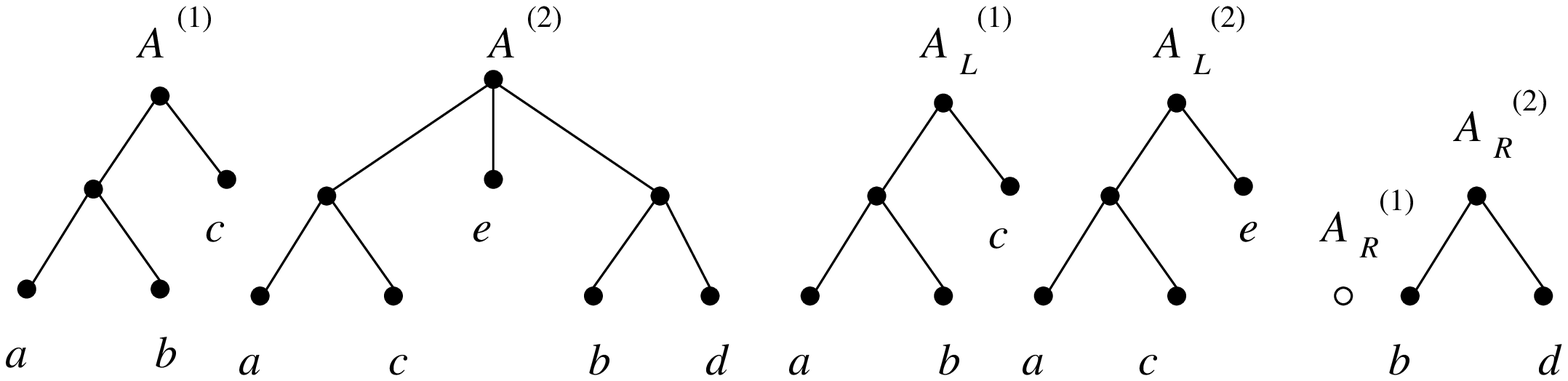}
	\caption{A bipartite $({\mathcal A}_L, {\mathcal A}_R)$ of 
	a cut-subforest $\mathcal A$. The empty tree is represented
	by a white circle.}
	\label{fig:mcsp_bipartite_example}
\end{figure}

\begin{lemma}
Consider a bipartite $({\mathcal A}_L, {\mathcal A}_R)$ of some cut-subforest ${\mathcal A}$
of $\mathcal T$.
If $Y_L$ and $Y_R$ are embedded supertrees of 
${\mathcal A}_L$ and ${\mathcal A}_R$ respectively
then $Y$ is an embedded supertree of ${\mathcal A}$, where $Y$ is
formed by connecting $Y_L$ and $Y_R$ to a common root.
\label{lem_embed_prop1}
\end{lemma}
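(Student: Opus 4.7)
The plan is to verify, for each $i \leq k$, the two conditions of Definition~\ref{D:mcsp_embed} for $Y$ with respect to $\mathcal{A}$: the leaf-set containment $L(Y) \cap L(\mathcal{T}^{(i)}) \subseteq L(\mathcal{A}^{(i)})$, and the refinement $Y|L(\mathcal{A}^{(i)}) \unrhd \mathcal{A}^{(i)}|L(Y)$.

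The containment condition is almost immediate. Since all leaf labels are distinct, $L(Y) = L(Y_L) \cup L(Y_R)$, and the hypothesis on $Y_L$ and $Y_R$ yields $L(Y_L) \cap L(\mathcal{T}^{(i)}) \subseteq L(\mathcal{A}_L^{(i)})$ and $L(Y_R) \cap L(\mathcal{T}^{(i)}) \subseteq L(\mathcal{A}_R^{(i)})$. Taking the union and using $L(\mathcal{A}_L^{(i)}) \cup L(\mathcal{A}_R^{(i)}) = L(\mathcal{A}^{(i)})$, which comes directly from Definition~\ref{D:mcsp_bipartite}, finishes this half.

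For the refinement, I would fix $i$ and rewrite both sides in terms of the $L$ and $R$ pieces. On the left, $Y|L(\mathcal{A}^{(i)})$ is precisely the tree obtained by hanging $Y_L|L(\mathcal{A}_L^{(i)})$ and $Y_R|L(\mathcal{A}_R^{(i)})$ under a common root, since the containment just proved implies $L(Y_L) \cap L(\mathcal{A}^{(i)}) = L(Y_L) \cap L(\mathcal{A}_L^{(i)})$ and symmetrically on the right. On the other side, because the subtrees attached to the root of $\mathcal{A}^{(i)}$ partition into $S_L^{(i)}$ and $S_R^{(i)}$, the tree $\mathcal{A}^{(i)}|L(Y)$ is obtained from the tree $Z$ that has $\mathcal{A}_L^{(i)}|L(Y)$ and $\mathcal{A}_R^{(i)}|L(Y)$ as its two root-subtrees by contracting a single edge, so $Z \unrhd \mathcal{A}^{(i)}|L(Y)$. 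Since $L(Y_R)$ shares no label with $L(\mathcal{A}_L^{(i)})$ and vice versa, I can rewrite $\mathcal{A}_L^{(i)}|L(Y) = \mathcal{A}_L^{(i)}|L(Y_L)$ and $\mathcal{A}_R^{(i)}|L(Y) = \mathcal{A}_R^{(i)}|L(Y_R)$ inside $Z$. The two refinement hypotheses $Y_L|L(\mathcal{A}_L^{(i)}) \unrhd \mathcal{A}_L^{(i)}|L(Y_L)$ and $Y_R|L(\mathcal{A}_R^{(i)}) \unrhd \mathcal{A}_R^{(i)}|L(Y_R)$ then give $Y|L(\mathcal{A}^{(i)}) \unrhd Z$, and composing with $Z \unrhd \mathcal{A}^{(i)}|L(Y)$ yields the required refinement.

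The main obstacle I expect is the bookkeeping around degenerate cases, namely an empty $Y_L$ or $Y_R$, an empty $\mathcal{A}_L^{(i)}$ or $\mathcal{A}_R^{(i)}$, or a side that contributes just a single leaf. In these situations the ``common root'' constructions above introduce degree-two internal nodes that must be suppressed under restriction, so one or both of the intermediate trees collapse and the refinement claim degenerates to one of the hypotheses directly. I would handle these cases at the outset so that the general argument above can proceed assuming both sides are nonempty and contribute at least two leaves.
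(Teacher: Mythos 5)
The paper does not actually prove this lemma---it is one of the proofs deferred to the full version---so there is nothing in-paper to compare against; your proposal is the natural direct verification of Definition~\ref{D:mcsp_embed} and it is correct. Both halves go through: the containment is immediate from the hypotheses together with $L({\mathcal A}_L^{(i)}) \cup L({\mathcal A}_R^{(i)}) \subseteq L({\mathcal A}^{(i)})$, and the refinement argument via the intermediate tree $Z$ works because restriction and refinement both behave well under joining disjointly-labeled trees at a common root. Two small points to tighten when writing it up: the claim that $L(Y_R) \cap L({\mathcal A}_L^{(i)}) = \emptyset$ is not just ``labels are distinct''---it needs the containment $L(Y_R) \cap L({\mathcal T}^{(i)}) \subseteq L({\mathcal A}_R^{(i)})$ combined with the disjointness of $L({\mathcal A}_L^{(i)})$ and $L({\mathcal A}_R^{(i)})$ coming from the bipartition; and passing from $Z$ to ${\mathcal A}^{(i)}|L(Y)$ generally contracts two edges (one per group), not one, though this changes nothing. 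The degenerate cases you flag (empty or single-leaf sides, and the case $L(Y) \cap L({\mathcal A}^{(i)})$ entirely on one side, where the new root of $Y|L({\mathcal A}^{(i)})$ gets suppressed) are exactly the bookkeeping that needs to be done, and each collapses to one of the two hypotheses as you say.
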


\begin{lemma}
Let ${\mathcal A}$ be a cut-subforest  of $\mathcal T$.
If $({\mathcal A}_L, {\mathcal A}_R)$ is a bipartite of ${\mathcal A}$ then 
$\mcsp({\mathcal A}) \geq \mcsp({\mathcal A}_L) + \mcsp({\mathcal A}_R)$.
\label{L:mcsp_sufficient}
\end{lemma}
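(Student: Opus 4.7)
The plan is to apply Lemma \ref{lem_embed_prop1} directly to witnesses achieving $\mcsp(\mathcal{A}_L)$ and $\mcsp(\mathcal{A}_R)$. Specifically, I would let $Y_L$ be a maximum embedded supertree of $\mathcal{A}_L$ and $Y_R$ a maximum embedded supertree of $\mathcal{A}_R$, so $|L(Y_L)| = \mcsp(\mathcal{A}_L)$ and $|L(Y_R)| = \mcsp(\mathcal{A}_R)$. Forming $Y$ by attaching $Y_L$ and $Y_R$ under a new common root, Lemma \ref{lem_embed_prop1} immediately certifies that $Y$ is an embedded supertree of $\mathcal{A}$, whence $\mcsp(\mathcal{A}) \ge |L(Y)|$.

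The one thing that is not quite automatic is the equality $|L(Y)| = |L(Y_L)| + |L(Y_R)|$, which requires $L(Y_L) \cap L(Y_R) = \emptyset$. To verify this I would use the bipartite property together with the embedding condition. For each $i \le k$, because $S_L^{(i)}$ and $S_R^{(i)}$ are disjoint collections of subtrees hanging from the root of $\mathcal{A}^{(i)}$, the leaf sets satisfy $L(\mathcal{A}_L^{(i)}) \cap L(\mathcal{A}_R^{(i)}) = \emptyset$. Embeddedness gives $L(Y_L) \cap L(\mathcal{T}^{(i)}) \subseteq L(\mathcal{A}_L^{(i)})$ and $L(Y_R) \cap L(\mathcal{T}^{(i)}) \subseteq L(\mathcal{A}_R^{(i)})$. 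Since every leaf of an embedded supertree must lie in $\bigcup_i L(\mathcal{T}^{(i)})$ (the supertree only uses labels appearing in the input), any hypothetical shared label of $Y_L$ and $Y_R$ would sit in $L(\mathcal{A}_L^{(i)}) \cap L(\mathcal{A}_R^{(i)})$ for some $i$, contradicting disjointness.

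I do not expect a real obstacle here; the proof is a one-line corollary of Lemma \ref{lem_embed_prop1} once disjointness of leaf sets is argued. The only minor bookkeeping is the degenerate situation when $Y_L$ or $Y_R$ is empty (then $Y$ might have a superfluous root of degree one or two, which can be suppressed without changing the leaf count), and the situation where $\mcsp(\mathcal{A}_L) = 0$ or $\mcsp(\mathcal{A}_R) = 0$, in which case the inequality trivializes since $Y_L$ (respectively $Y_R$) on its own remains an embedded supertree of $\mathcal{A}$ by the same reasoning.
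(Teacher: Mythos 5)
Your proof is correct and follows essentially the same route as the paper: take maximum embedded supertrees of $\mathcal{A}_L$ and $\mathcal{A}_R$, join them under a common root, and invoke Lemma~\ref{lem_embed_prop1}. Your explicit verification that $L(Y_L) \cap L(Y_R) = \emptyset$ is a detail the paper's proof glosses over (it simply asserts the size of $Y$), so your write-up is if anything slightly more careful.
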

\begin{proof}
Consider an embedded supertree $Y_L$ of
${\mathcal A}_L$ such that $|L(Y_L)| = \mcsp({\mathcal A}_L)$. 
Define $Y_R$ for ${\mathcal A}_R$ similarly.
Let $Y$ be a tree
formed by connecting $Y_L$ and $Y_R$ with a common root.
Note that $Y$ is of size $\mcsp({\mathcal A}_L)+ \mcsp({\mathcal A}_R)$.
By Lemma~\ref{lem_embed_prop1}, 
$Y$ is an embedded supertree of ${\mathcal A}$ and
hence the lemma follows.
\end{proof}

\begin{lemma}
Given a cut-subforest ${\mathcal A}$ of $\mathcal T$, 
let $Y$ be a binary embedded supertree of $\mathcal A$ 
with left subtree $Y_L$ and right subtree $Y_R$.
There exists a bipartite $({\mathcal A}_L, {\mathcal A}_R)$ of ${\mathcal A}$
such that either $(i)$ $Y$ is an embedded supertree of ${\mathcal A}_L$;
or $(ii)$ $Y_L$ and $Y_R$ are 
embedded supertrees of ${\mathcal A}_L$ and ${\mathcal A}_R$ respectively.
\label{lem_embedded_prop2}
\end{lemma}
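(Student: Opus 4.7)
The plan is to build the bipartition from how $L(Y) = L(Y_L) \sqcup L(Y_R)$ interacts with each $\mathcal{A}^{(i)}$. Fix $i \leq k$ and set $S_i = L(Y) \cap L(\mathcal{A}^{(i)})$, $S_i^L = S_i \cap L(Y_L)$, and $S_i^R = S_i \cap L(Y_R)$; enumerate the subtrees attached to the root of $\mathcal{A}^{(i)}$ as $\mathcal{A}^{(i)}_1, \ldots, \mathcal{A}^{(i)}_{d_i}$ with leaf sets $L_{i,1}, \ldots, L_{i,d_i}$. The combinatorial crux is the claim that for every $j$ the intersection $L_{i,j} \cap L(Y)$ is either empty, entirely contained in $S_i^L$, entirely contained in $S_i^R$, or (in one degenerate subcase) equal to all of $S_i$. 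I would prove this via cluster containment: the hypothesis $Y|L(\mathcal{A}^{(i)}) \unrhd \mathcal{A}^{(i)}|L(Y)$ is the same as $Y|S_i \unrhd \mathcal{A}^{(i)}|S_i$, so every non-trivial cluster of the right-hand tree must be a cluster of the left-hand tree. Since $Y$ is binary, every non-trivial cluster of $Y|S_i$ is either $S_i$ itself, a subset of $S_i^L$, or a subset of $S_i^R$. Applying this to the cluster $L_{i,j} \cap S_i$ at $\mathcal{A}^{(i)}_j$ in $\mathcal{A}^{(i)}|S_i$ yields the claim; the singleton and empty cases are immediate.

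Using the claim, I place $j$ in $S_L^{(i)}$ whenever $L_{i,j} \cap L(Y)$ is a non-empty subset of $S_i^L$, place $j$ in $S_R^{(i)}$ whenever it is a non-empty subset of $S_i^R$, and distribute the remaining indices (those with $L_{i,j} \cap L(Y) = \emptyset$, plus the at-most-one exceptional index $j^\star$ with $L_{i,j^\star} \cap L(Y) = S_i$) so that both $\mathcal{A}_L$ and $\mathcal{A}_R$ become valid cut-subforests. For case $(ii)$ the embedded-supertree property of $Y$ gives $L(Y_L) \cap L(\mathcal{T}^{(i)}) = S_i^L$, and by construction this set is covered by $L(\mathcal{A}_L^{(i)})$. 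A short unpacking of the restrictions then yields $Y_L|L(\mathcal{A}_L^{(i)}) = Y|S_i^L$ and $\mathcal{A}_L^{(i)}|L(Y_L) = \mathcal{A}^{(i)}|S_i^L$, and since refinement is preserved under further restriction, restricting $Y|S_i \unrhd \mathcal{A}^{(i)}|S_i$ to $S_i^L$ delivers $Y_L|L(\mathcal{A}_L^{(i)}) \unrhd \mathcal{A}_L^{(i)}|L(Y_L)$. The argument for $Y_R$ is symmetric, giving case $(ii)$.

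Case $(i)$ is invoked only in the boundary situation where the forced assignments cannot keep both sides non-trivial at once (for example when every occupied subtree is forced to the same side, and no free index exists on the other side to balance). In that situation I would push $j^\star$ and all occupied subtrees to the $L$ side, donate one free subtree to the $R$ side, and then check by the same cluster-containment reasoning that $Y$ itself embeds into $\mathcal{A}_L$. The main obstacle is the cluster-based step together with careful handling of degree-two suppression at the root of restricted trees, particularly when only one subtree of $\mathcal{A}^{(i)}$ contributes leaves to $L(Y)$, so that identifications such as $\mathcal{A}_L^{(i)}|L(Y_L) = \mathcal{A}^{(i)}|S_i^L$ remain correct under suppression; the rest of the proof is bookkeeping against the definitions of bipartite and of embedded supertree.
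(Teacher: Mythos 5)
The paper defers this proof to its full version, so there is nothing to compare against; judged on its own, your machinery is essentially right but the case split is mislocated. The cluster-containment claim is correct and is the right engine: since $Y|S_i \unrhd \mathcal{A}^{(i)}|S_i$, every cluster of $\mathcal{A}^{(i)}|S_i$ is a cluster of $Y|S_i$, and because $Y$ is binary those are $S_i$ itself or subsets of $S_i^L$ or $S_i^R$; your case-$(ii)$ bookkeeping (restriction ignores unoccupied subtrees, refinement survives restriction) is also fine. The problem is your account of when case $(i)$ is needed. You present it as a non-triviality issue (``the forced assignments cannot keep both sides non-trivial at once''), but under the standing assumption $L(Y) \subseteq \bigcup_i L(\mathcal{A}^{(i)})$ that situation essentially never arises: every leaf of $Y_L$ (resp.\ $Y_R$) lies in some occupied child subtree, and when no exceptional index is live that subtree is forced to the $L$ (resp.\ $R$) side, so both sides automatically receive a nonempty tree. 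The genuine obstruction to case $(ii)$ is the exceptional index itself: if some child subtree $\mathcal{A}^{(i)}_{j^\star}$ satisfies $L_{i,j^\star} \cap L(Y) = S_i$ with both $S_i^L$ and $S_i^R$ nonempty, then $j^\star$ must go entirely to one side and the other side cannot cover its share of $S_i$; your sentence ``by construction this set is covered by $L(\mathcal{A}_L^{(i)})$'' is exactly what fails there. Concretely, take $k=1$, let $\mathcal{A}^{(1)}$ have a root with two children, a cherry on $\{a,b\}$ and a leaf $c$, and let $Y$ be the cherry with $Y_L = a$, $Y_R = b$: both sides of a bipartite are easily kept non-trivial, yet no bipartite satisfies $(ii)$, and one must use $(i)$ with $\mathcal{A}_L$ the cherry and $\mathcal{A}_R$ the leaf $c$.

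The good news is that the remedy you sketch for case $(i)$ --- send $j^\star$ and every occupied subtree to the $L$ side, donate one free subtree to the $R$ side, and verify that $Y$ embeds into $\mathcal{A}_L$ because dropping unoccupied subtrees does not change $\mathcal{A}^{(i)}|S_i$ --- is precisely what handles the exceptional index, and a free sibling to donate always exists there since when $j^\star$ is live all other children of the root of $\mathcal{A}^{(i)}$ are unoccupied. So the proof is repaired simply by re-routing the dichotomy: invoke case $(i)$ exactly when some live exceptional $j^\star$ exists, and case $(ii)$ otherwise. As currently written, the argument for case $(ii)$ has a hole in the one situation that case $(i)$ of the lemma was designed for.
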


\begin{lemma}
For each non-terminal cut-subforest  ${\mathcal A}$ of $\mathcal T$, there exists
a bipartite $({\mathcal A}_L, {\mathcal A}_R)$ of ${\mathcal A}$ such that 
$\mcsp({\mathcal A}) \leq \mcsp({\mathcal A}_L) + \mcsp({\mathcal A}_R)$.
\label{L:mcsp_necessary}
\end{lemma}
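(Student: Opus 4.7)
The plan is to apply Lemma~\ref{lem_embedded_prop2} to a maximum-size embedded supertree of $\mathcal{A}$ and inspect the two cases it produces. First I fix a binary embedded supertree $Y$ of $\mathcal{A}$ with $|L(Y)| = \mcsp(\mathcal{A})$. Working with a binary $Y$ is legitimate because of the remark at the start of Section~\ref{mcsp}: any embedded supertree has a binary refinement on the same leaf set, and a refinement of an embedded supertree of $\mathcal{A}$ is still an embedded supertree of $\mathcal{A}$ (since refinement preserves the relation $Y|L(\mathcal{A}^{(i)}) \unrhd \mathcal{A}^{(i)}|L(Y)$ and does not change the leaf set). Write $Y$ as the join of its left subtree $Y_L$ and right subtree $Y_R$, so that $|L(Y)| = |L(Y_L)| + |L(Y_R)|$. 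The degenerate case $|L(Y)| \leq 1$ can be disposed of separately: since $\mathcal{A}$ is non-terminal some $\mathcal{A}^{(i)}$ has a root of degree $\geq 2$, so a bipartite exists, and placing the single leaf (if any) on whichever side can accept it already gives $\mcsp(\mathcal{A}_L) + \mcsp(\mathcal{A}_R) \geq |L(Y)|$.

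Next, invoke Lemma~\ref{lem_embedded_prop2} on $\mathcal{A}$ and $Y$ to obtain a bipartite $(\mathcal{A}_L, \mathcal{A}_R)$ of $\mathcal{A}$ in which either $(i)$ $Y$ is itself an embedded supertree of $\mathcal{A}_L$, or $(ii)$ $Y_L$ and $Y_R$ are embedded supertrees of $\mathcal{A}_L$ and $\mathcal{A}_R$ respectively. In case $(i)$ we have $\mcsp(\mathcal{A}_L) \geq |L(Y)| = \mcsp(\mathcal{A})$ directly from the definition of $\mcsp$ as a supremum over embedded supertrees, and combined with $\mcsp(\mathcal{A}_R) \geq 0$ this already gives the claim. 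In case $(ii)$,
\[
\mcsp(\mathcal{A}_L) + \mcsp(\mathcal{A}_R) \;\geq\; |L(Y_L)| + |L(Y_R)| \;=\; |L(Y)| \;=\; \mcsp(\mathcal{A}).
\]

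Thus the lemma is a direct corollary of Lemma~\ref{lem_embedded_prop2} together with the maximality definition of $\mcsp$, and I do not expect a real obstacle. The one subtlety worth double-checking is the reduction to binary $Y$ and the non-negativity convention for $\mcsp$ on the side whose cut-subtrees may all be empty in case $(i)$; once those are granted, the two cases line up exactly with the two summands on the right-hand side. The genuine combinatorial content of the recursion is already packaged into Lemma~\ref{lem_embedded_prop2}, which is where any real technical difficulty for the $\leq$ direction of \eqref{eq:mcsp_recurrent_formula} must live.
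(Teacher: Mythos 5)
Your proof is correct and follows essentially the same route as the paper's: take a maximum-size binary embedded supertree $Y$ of $\mathcal{A}$, apply Lemma~\ref{lem_embedded_prop2} to get a bipartite $(\mathcal{A}_L,\mathcal{A}_R)$, and observe that in either case $|L(Y)| \leq \mcsp(\mathcal{A}_L) + \mcsp(\mathcal{A}_R)$. The extra remarks you add (justifying the reduction to binary $Y$ and the non-negativity convention in case $(i)$) are reasonable elaborations of points the paper leaves implicit, not a different argument.
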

\begin{proof}
Let $Y$ be a binary embedded supertree of ${\mathcal A}$ 
such that $\left| L(Y) \right| = \mcsp({\mathcal A})$. 
By Lemma~\ref{lem_embedded_prop2},
there exists a bipartite $({\mathcal A}_L, {\mathcal A}_R)$ of ${\mathcal A}$
such that either
(1) $Y$ is an embedded supertree of ${\mathcal A}_L$; or
(2) $Y_L$ and $Y_R$ are 
embedded supertrees of ${\mathcal A}_L$ and ${\mathcal A}_R$ respectively, 
where $Y_L$ is the left subtree and $Y_R$ is the right subtree of $Y$. 
In both cases, $|L(Y)| \leq \mcsp({\mathcal A}_L) + \mcsp({\mathcal A}_R)$.
Then the lemma follows.
\end{proof}

The above discussion then leads to Theorem~\ref{T:mcsp_formula}. 

\begin{theorem}
For every cut-subforest  ${\mathcal A}$ of $\mathcal T$, 
the value $\mcsp({\mathcal A})$ equals to
$$
\left\{
\begin{array}{lr}
\left| \Lambda({\mathcal A}) \right|,  \mbox{ if } {\mathcal A} \mbox{ is terminal,}\\
\max \{ \mcsp({\mathcal A}_L) + \mcsp({\mathcal A}_R) \mid {\mathcal A}_L \mbox{ and } {\mathcal A}_R \mbox{ bipartition } {\mathcal A} \}, \mbox{ otherwise} \enspace .
\end{array}
\right.
$$
\label{T:mcsp_formula}
\end{theorem}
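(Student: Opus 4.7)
The plan is to verify the theorem by case analysis on whether $\mathcal{A}$ is terminal or non-terminal, leveraging the lemmas already established.

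For the terminal case, the identity $\mcsp(\mathcal{A}) = |\Lambda(\mathcal{A})|$ is exactly Lemma~\ref{L:mcsp_terminal}, so there is nothing further to do beyond citing it.

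For the non-terminal case, I need to prove both inequalities in equation \eqref{eq:mcsp_recurrent_formula}. For the lower bound, I would argue that since Lemma~\ref{L:mcsp_sufficient} guarantees $\mcsp(\mathcal{A}) \geq \mcsp(\mathcal{A}_L) + \mcsp(\mathcal{A}_R)$ for \emph{every} bipartition $(\mathcal{A}_L, \mathcal{A}_R)$ of $\mathcal{A}$, the inequality is preserved after taking the maximum over all bipartitions, yielding $\mcsp(\mathcal{A}) \geq \max\{\mcsp(\mathcal{A}_L) + \mcsp(\mathcal{A}_R)\}$. For the upper bound, Lemma~\ref{L:mcsp_necessary} hands us at least one specific bipartition $(\mathcal{A}_L^\ast, \mathcal{A}_R^\ast)$ of $\mathcal{A}$ such that $\mcsp(\mathcal{A}) \leq \mcsp(\mathcal{A}_L^\ast) + \mcsp(\mathcal{A}_R^\ast)$, and this value is bounded above by the maximum taken over all bipartitions of $\mathcal{A}$. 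Combining the two directions gives the desired recurrence.

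Since the heavy lifting has already been done in the three lemmas, there is essentially no real obstacle in the proof of the theorem itself. The only subtle point is making sure a non-terminal $\mathcal{A}$ actually admits at least one bipartition, so that the maximum in \eqref{eq:mcsp_recurrent_formula} is taken over a non-empty set; this follows because $\mathcal{A}$ being non-terminal means some $\mathcal{A}^{(i)}$ has a root with at least two attached subtrees (or $\mathcal{A}$ has at least one non-empty tree with an internal root), so the partition prescribed in Definition~\ref{D:mcsp_bipartite} can be performed. With this observation noted, the proof reduces to a short assembly of Lemmas~\ref{L:mcsp_terminal}, \ref{L:mcsp_sufficient}, and \ref{L:mcsp_necessary}.
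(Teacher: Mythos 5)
Your proposal is correct and matches the paper's own treatment: the paper gives no separate proof of Theorem~\ref{T:mcsp_formula}, stating only that it follows from Lemma~\ref{L:mcsp_terminal} (terminal case) together with Lemmas~\ref{L:mcsp_sufficient} and \ref{L:mcsp_necessary} (the two inequalities of the recurrence), which is precisely the assembly you describe. Your added remark that a non-terminal cut-subforest admits at least one bipartition is a reasonable bit of extra care but does not change the argument.
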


We define an ordering of the cut-subforests in $\mathcal O$ as follows.
For any cut-subforests ${\mathcal A}_1, {\mathcal A}_2$ in $\mathcal O$,
we say that ${\mathcal A}_1$ is smaller than  ${\mathcal A}_2$ 
if ${\mathcal A}_1^{(i)}$ is a cut-subtree of ${\mathcal A}_2^{(i)}$ for $i=1,2,\ldots, k$.
Our algorithm enumerates ${\mathcal A} \in {\mathcal O}$ in topologically increasing order
and computes $\mcsp({\mathcal A})$ based on Theorem~\ref{T:mcsp_formula}.
Theorem~\ref{T:mcsp_complexity}
states the complexity of our algorithm. 

%
%
%
%

\begin{theorem}
A maximum compatible supertree of $k$ rooted phylogenetic trees
can be obtained in $O\left(2^{2k D} n^k\right)$ time .
\label{T:mcsp_complexity}
\end{theorem}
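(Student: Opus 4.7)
The plan is to analyse the dynamic programming algorithm suggested by Theorem~\ref{T:mcsp_formula}: fill a table storing $\mcsp(\mathcal{A})$ for every $\mathcal{A} \in \mathcal{O}$, processed in topologically increasing order so that whenever we treat a non-terminal $\mathcal{A}$, the values $\mcsp(\mathcal{A}_L)$ and $\mcsp(\mathcal{A}_R)$ required by the recurrence are already available. The desired quantity is then the entry for $\mathcal{T}$ itself (since $\mathcal{T}$ is a cut-subforest of $\mathcal{T}$), and an actual maximum compatible supertree is recovered by backtracking from that entry through the bipartition realising the maximum, at no extra asymptotic cost.

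By Lemma~\ref{L:mcsp_bound_position} we have $|\mathcal{O}| = O\bigl(2^{kD} n^k\bigr)$, so it suffices to charge $O(2^{kD})$ work to each entry. For a terminal cut-subforest the work is trivial: evaluating $|\Lambda(\mathcal{A})|$ amounts to testing each of the at most $k$ leaves against the sets $L(\mathcal{T}^{(i)}) - L(\mathcal{A}^{(i)})$ in $O(k)$ time. For a non-terminal $\mathcal{A}$ we enumerate all bipartitions $(\mathcal{A}_L, \mathcal{A}_R)$: fixing $i$, the root of $\mathcal{A}^{(i)}$ has degree $d_i \le D$ and its $d_i$ attached subtrees can be distributed between $\mathcal{A}_L^{(i)}$ and $\mathcal{A}_R^{(i)}$ in $2^{d_i}$ ways, so the total number of bipartitions is $\prod_{i=1}^{k} 2^{d_i} \le 2^{kD}$. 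For each bipartition we perform two table look-ups, one addition, and a constant-time update of a running maximum, for $O(2^{kD})$ work per entry and $O(2^{2kD} n^k)$ overall.

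The main obstacle is bookkeeping, so that the counting above actually translates into the claimed running time. We need each table access to cost $O(1)$, and the enumeration of $\mathcal{O}$ to respect the containment order from Section~\ref{mcsp}. Observe that any cut-subtree of $\mathcal{T}^{(i)}$ is uniquely determined by a pair (root node $v$, bitmask of selected children of $v$), which fits in $\lceil \log n \rceil + D$ bits; concatenating such encodings across the $k$ input trees gives an integer key into an array of size $O\bigl(2^{kD} n^k\bigr)$ that supports constant-time read and write. Precomputing all cut-subtrees of each $\mathcal{T}^{(i)}$ and iterating over their cross-product in order of increasing root depth yields a valid topological ordering on $\mathcal{O}$, constructed within the same time budget. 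Once these implementation details are in place, the per-entry analysis above delivers the stated $O\bigl(2^{2kD} n^k\bigr)$ bound for both computing $\mcsp(\mathcal{T})$ and recovering a maximum compatible supertree by backtracking.
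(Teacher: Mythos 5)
Your proposal is correct and follows essentially the same argument as the paper: bound the number of cut-subforests by Lemma~\ref{L:mcsp_bound_position} and charge $O(2^{kD})$ bipartitions to each non-terminal entry, with the terminal case handled via $\Lambda$. The only differences are cosmetic --- you supply implementation details (the root/bitmask encoding and the topological enumeration) that the paper leaves implicit, and your $O(k)$ estimate for computing $\left|\Lambda({\mathcal A})\right|$ should be $O(k^2)$ as in the paper (each of up to $k$ labels must be tested against all $k$ trees), which is immaterial since it is dominated by the $O(2^{kD})$ per-entry cost.
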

\begin{proof}
Testing if a cut-subforest is terminal takes $O(k)$ times, 
and each terminal cut-subforest  ${\mathcal A}$ then requires $O(k^2)$ time 
for the computation of $\Lambda({\mathcal A})$. In view of 
Lemma~\ref{L:mcsp_bound_position}, it suffices to show that 
each non-terminal cut-subforest  ${\mathcal A}$ has $O(2^{k D})$ bipartites. 
This result follows from the fact that for each $i \leq k$, 
there are at most $2^D$ ways to partition the set of
the subtrees attached to the root of  ${\mathcal A}^{(i)}$. 
\end{proof}

In the special case where every tree ${\mathcal T}^{(i)}$ is binary, 
Theorem~\ref{T:mcsp_binary} shows that our algorithm actually
has a better time complexity.
Note that the concepts of agreement supertree and compatible supertree 
will coincide for binary trees.
Hence, our algorithm
improves the $O\left(8^k n^k\right)$-time algorithm in \cite{Guillemot07}
for computing maximum agreement supertree of $k$ rooted binary trees. 

\begin{theorem}
If every tree in $\mathcal T$ is binary,
a maximum compatible supertree (or a maximum agreement supertree)
can be computed in $O\left(6^k n^k \right)$ time.
\label{T:mcsp_binary}
\end{theorem}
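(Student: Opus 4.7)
The plan is to keep the algorithm of Theorem~\ref{T:mcsp_complexity} unchanged and sharpen its complexity analysis in the binary case. The key observation is that when every $\mathcal{T}^{(i)}$ is binary, two cut-subtrees obtained from different selections become identical after suppressing the degree-two root produced by selecting a single subtree at an internal node; identifying such cut-subtrees collapses the cut-subtrees of $\mathcal{T}^{(i)}$ to exactly the empty tree together with the subtrees rooted at each node of $\mathcal{T}^{(i)}$, yielding at most $2n$ distinct cut-subtrees per tree.

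Rather than bounding cut-subforests and bipartites separately, I would bound the total DP work by counting, for each tree, the number of ordered pairs $(\mathcal{A}_L^{(i)}, \mathcal{A}_R^{(i)})$ that arise as the $i$-th component of some bipartite. A short case analysis bounds this by $6n - 3$: one pair $(\emptyset, \emptyset)$; at most $2(2n-1)$ pairs with exactly one empty side (one per non-empty cut-subtree per ordering); and at most $2(n-1)$ pairs with both sides non-empty, since in the binary case the two non-empty components must be the subtrees rooted at the two children of a common internal node, giving two orderings per each of the $n-1$ internal nodes. Because the bipartite of a cut-subforest factors as an independent choice of bipartite per tree, the total number of (cut-subforest, bipartite) pairs processed by the DP is at most $(6n)^k = 6^k n^k$, each handled in constant time.

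Terminal cut-subforests are handled separately as in Theorem~\ref{T:mcsp_complexity}: there are $O(n^k)$ of them (each $\mathcal{A}^{(i)}$ is empty or a leaf), and $\Lambda$ takes $O(k^2)$ to compute per cut-subforest, contributing $O(k^2 n^k) = O(6^k n^k)$ total. The main obstacle is to make the suppression identification precise, so that the pair $(\emptyset, S)$ where $S$ is the subtree rooted at some child of an internal node $v$ is counted only once---as a bipartite of $S$ itself---rather than a second time as a bipartite of the cut-subtree obtained by singleton-selecting $S$ at $v$; absent this identification the naive enumeration gives $(8n)^k$ pairs, matching the $O(8^k n^k)$ bound of~\cite{Guillemot07} instead of the $O(6^k n^k)$ improvement claimed here.
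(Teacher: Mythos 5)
Your proposal is correct and matches the paper's proof in substance: both arguments hinge on identifying the cut-subtrees of a binary tree with its $O(n)$ complete subtrees (absorbing the degree-one root produced by a singleton selection, which is exactly the identification you flag) and on the fact that bipartites factor independently across the $k$ trees. Your direct per-tree count of $6n-3$ ordered component pairs is simply the product form of the paper's stratified binomial sum $\sum_{r=0}^{k}\binom{k}{r}\bigl(4(n-1)\bigr)^r\bigl(2(n+1)\bigr)^{k-r}=(6n-2)^k$, so the two analyses are the same argument organized differently.
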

\begin{proof}
We claim that the processing of non-terminal cut-subforests of $\mathcal T$
requires $O\left(6^k n^k \right)$ time. The argument in the 
proof of Theorem~\ref{T:mcsp_complexity} tells that 
the remaining computation runs within the same 
asymptotic time bound. 
Consider an integer $r \in \{0, 1, \ldots, k \}$. We shall be dealing with
a cut-subforest  ${\mathcal A}$ such that there are exactly $r$ cut-subtrees 
${\mathcal A}^{(i)}$ whose roots are internal nodes of ${\mathcal T}^{(i)}$. 
The key of this proof is to show that 
the number of those cut-subforests does not exceed
$\left(
\begin{array}{c}
k \\
r
\end{array}
\right)(n-1)^r(n + 1)^{k - r}$,
and the running time for each cut-subforest  is 
$O\left(4^r 2^{k - r}\right)$. Hence, the total running time 
for all non-terminal cut-subforests is 
$$
\sum_{r = 0}^k \left(
\begin{array}{c}
k \\
r
\end{array}
\right)(n-1)^r(n + 1)^{k - r} O\left(4^r 2^{k - r}\right) 
= O\left(6^k n^k\right).
$$

We can count the number of the specified cut-subforests ${\mathcal A}$ as follows. 
First there are 
$\left(
\begin{array}{c}
k \\
r
\end{array}
\right)$ 
options for $r$ indices $i$ such that the roots of cut-subtrees 
${\mathcal A}^{(i)}$ are internal nodes of ${\mathcal T}^{(i)}$. 
For those cut-subtrees, we then appoint one of the $(n - 1)$
or fewer internal nodes of 
${\mathcal T}^{(i)}$ to be the root node of ${\mathcal A}^{(i)}$. 
Every other cut-subtree of $\mathcal A$ 
is a leaf or the empty tree, and then can be
determined from at most $n + 1$ alternatives. 
Multiplying those possibilities gives us 
the bound stipulated in the preceding paragraph. 

It remains to estimate the running time for each specified cut-subforest  ${\mathcal A}$. 
This task requires us to bound the number of bipartites of each cut-subforest. 
If the root $v$ of ${\mathcal A}^{(i)}$ is an internal node of ${\mathcal T}^{(i)}$ then 
${\mathcal A}^{(i)}$ contributes $4$ or fewer ways of partitioning the set 
of the subtrees attached to $v$. 
Otherwise,  we have at most $2$ ways of partitioning this set.  
Hence ${\mathcal A}$ owns at most $4^r 2^{k - r}$ bipartites, 
and this completes the proof. 
\end{proof}

\section{Algorithm for MCSP of unrooted trees}\label{mcsp_unroot}

Let ${\mathcal T}$ be a set of $k$ unrooted phylogenetic trees. 
This section extends the algorithm in Section~\ref{mcsp}
to find the size of a maximum compatible supertree of $\mathcal T$. 
The maximum compatible supertree can be obtained by backtracking. 
Surprisingly, the extended algorithm for unrooted trees 
runs within the same asymptotic time bound as the 
original algorithm for rooted trees. 

We will follow the same approach as Section~\ref{mcsp}, i.e., 
for each cut-subforest $\mathcal A$ of $\mathcal T$, we find an 
embedded supertree of $\mathcal A$ of maximum size. 
Definitions~\ref{D:mcsp_cut-subtree}, \ref{D:mcsp_cut-subforest},
and \ref{D:mcsp_embed} for cut-subforest and embedded supertree 
in the previous section are still valid for unrooted trees. 
Notice that although $\mathcal T$ is the set of unrooted trees, 
each cut-subforest $\mathcal A$ of $\mathcal T$ consists of 
rooted trees. (See Figure~\ref{fig:mcsp_unroot_example} 
for an example of cut-subforest for unrooted trees.) 
Hence we can use the algorithm in Section~\ref{mcsp}
to find the maximum embedded supertree of $\mathcal A$. We then 
select the biggest tree $T$ among those maximum embedded supertrees
for all cut-subforests of $\mathcal T$, and unroot $T$ to obtain the
maximum compatible supertree of $\mathcal T$. 

\begin{figure}[htbp]
	\centering
		\includegraphics[width=0.9\textwidth]{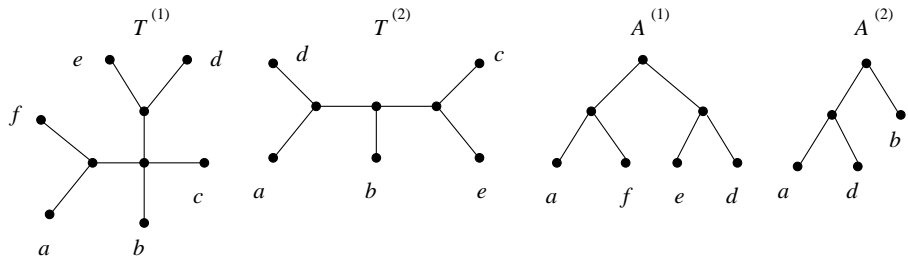}
	\caption{The set of rooted trees ${\mathcal A} = \{{\mathcal A}^{(1)}, 
	{\mathcal A}^{(2)} \}$ is a cut-subforest of 
	${\mathcal T} = \{ {\mathcal T}^{(1)}, {\mathcal T}^{(2)} \}$.}
	\label{fig:mcsp_unroot_example}
\end{figure}

Theorem~\ref{T:unroot_mcsp_complexity} shows that 
the extended algorithm has the same asymptotic time bound 
as the algorithm in Section~\ref{mcsp}. 

\begin{theorem}
We can find a maximum compatible supertree of $k$
unrooted phylogenetic trees in $O\left( 2^{2k D}n^k \right)$ time. 
\label{T:unroot_mcsp_complexity}
\end{theorem}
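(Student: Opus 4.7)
The plan is to replay the analysis of Theorem~\ref{T:mcsp_complexity} on the set of cut-subforests of the unrooted input trees, invoking the correctness of the black-box reduction sketched in the text. First I observe that, by Definitions~\ref{D:mcsp_cut-subtree} and \ref{D:mcsp_cut-subforest}, every cut-subforest $\mathcal{A}$ of $\mathcal{T}$ is a set of rooted trees, irrespective of whether $\mathcal{T}$ itself is rooted or unrooted. Consequently Definition~\ref{D:mcsp_embed} and the recurrence of Theorem~\ref{T:mcsp_formula} apply unchanged to each such $\mathcal{A}$, and the dynamic program from Section~\ref{mcsp} computes $\mcsp(\mathcal{A})$ verbatim.

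Next I would verify the correctness of the reduction, namely that the maximum value of $\mcsp(\mathcal{A})$ as $\mathcal{A}$ ranges over cut-subforests of the unrooted $\mathcal{T}$ equals the size of a maximum unrooted compatible supertree of $\mathcal{T}$. One direction is direct: if $Y$ is an embedded supertree of some $\mathcal{A}$, then unrooting $Y$ gives a compatible supertree of $\mathcal{T}$, since contracting the root preserves the inequality $Y|L(\mathcal{T}^{(i)})\unrhd \mathcal{T}^{(i)}|L(Y)$. For the converse, given an unrooted compatible supertree $Y$ of $\mathcal{T}$, I pick an arbitrary edge $e$ of $Y$, subdivide it to form a rooted tree $\widehat{Y}$, and map $e$ to a corresponding location in each $\mathcal{T}^{(i)}$ to obtain a cut-subtree $\mathcal{A}^{(i)}$ of $\mathcal{T}^{(i)}$ that has $\widehat{Y}$ as an embedded supertree.

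For the running time, I would re-establish the counting of Lemma~\ref{L:mcsp_bound_position}: each unrooted $\mathcal{T}^{(i)}$ has at most $n-1$ internal nodes, each of degree at most $D$, so it yields at most $2^D n$ cut-subtrees and the total number of cut-subforests remains $O(2^{kD}n^k)$. The bipartite count in the proof of Theorem~\ref{T:mcsp_complexity} depends only on the rooted structure of $\mathcal{A}$, so it stays at $O(2^{kD})$ per cut-subforest. Multiplying these two bounds, and adding the per-terminal cost of evaluating $|\Lambda(\mathcal{A})|$, gives the claimed $O(2^{2kD}n^k)$ running time.

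The main obstacle is the converse direction of the correctness argument: one must pin down, for the arbitrarily chosen rooting edge $e$ of $Y$, a well-defined place in each $\mathcal{T}^{(i)}$ at which to root, keeping in mind that $e$ may correspond to an edge of $\mathcal{T}^{(i)}|L(Y)$ that is contracted or supressed relative to $\mathcal{T}^{(i)}$. Once this correspondence is set up carefully, and the resulting rooted restriction is shown to be realizable as a cut-subtree of $\mathcal{T}^{(i)}$, Lemmas~\ref{L:mcsp_sufficient} and \ref{L:mcsp_necessary} transfer without change and the theorem follows.
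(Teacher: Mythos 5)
Your proposal is correct and takes essentially the same route as the paper: observe that cut-subforests of unrooted trees are still sets of rooted trees, rerun the rooted dynamic program on them, and recount cut-subforests and bipartites exactly as in Lemma~\ref{L:mcsp_bound_position} and Theorem~\ref{T:mcsp_complexity}. You are in fact more explicit than the paper about the one genuinely delicate point --- choosing a rooting of the unrooted supertree $Y$ and a matching cut-subforest so that $Y$ becomes an embedded supertree --- which the paper's own proof leaves implicit.
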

\begin{proof}
Using a similar proof as Lemma~\ref{L:mcsp_bound_position}, we can prove that 
there are $O\left( 2^{k D} n^k \right)$ cut-subforests of $\mathcal T$. 
As given in the proof of Theorem~\ref{T:mcsp_complexity}, 
finding the maximum embedded supertrees of each cut-subforest
takes $O(2^{k D})$ time. Hence the extended algorithm runs
within the specified time bound. 
\end{proof}

\vskip-0.3cm
\section{Algorithm for MASP of rooted trees}\label{masp}

Let $\mathcal T$ be a set of $k$
rooted phylogenetic trees. 
This section presents a dynamic programming algorithm 
to compute the size of a maximum agreement supertree 
of $\mathcal T$ in  $O\left( (k D)^{k D + 3} (2n)^k \right)$ time. 
The maximum agreement supertree can be obtained in the 
same asymptotic time bound by backtracking. 

The idea here is similar to that of Section~\ref{mcsp}.
However, while we can assume that compatible supertrees
are binary, the maximum degree of agreement supertrees
can grow up to $k D$. It is the reason why we have the factor
$O( (k D)^{k D + 3} )$ in the complexity. 

\begin{definition}[Sub-forest]
Given a set of $k$ rooted trees $\mathcal T$,
a \emph{sub-forest} of $\mathcal T$ is a set 
${\mathcal A} = \{ {\mathcal A}^{(1)},  \ldots ,{\mathcal A}^{(k)} \}$, where 
each ${{\mathcal A}}^{(i)}$ is either an empty tree or 
a complete subtree rooted at some node of ${\mathcal T}^{(i)}$, 
and at least one element of ${\mathcal A}$ is not an empty tree. 
\label{D:masp_sub-forest}
\end{definition}

Notice that the definition of sub-forest does not coincide
with the concept of cut-subforest in 
Definition~\ref{D:mcsp_cut-subforest} of Section~\ref{mcsp}. 
For example, the cut-subforest $\mathcal A$ 
in Figure~\ref{fig:cut_subtree_example} is not
a sub-forest of $\mathcal T$, because ${\mathcal A}^{(2)}$ 
is not a complete subtree rooted at some node of ${\mathcal T}^{(2)}$. 
Let $\mathcal O$ denote the set of all possible sub-forests of $\mathcal T$.
Then $\left| \mathcal O \right| = O\left( (2n)^k \right)$. 

\begin{definition}[Enclosed supertree]
For any sub-forest ${\mathcal A}$ of $\mathcal T$,
a tree $X$ is called an \emph{enclosed supertree} of ${\mathcal A}$ if
$X$ is an agreement supertree of ${\mathcal A}$, and
$L(X) \cap L({\mathcal T}^{(i)}) \, \subseteq \, L({\mathcal A}^{(i)})$
for all $i \leq k$.
\label{D:masp_enclose}
\end{definition}

For each sub-forest  ${\mathcal A}$ of $\mathcal T$, let $\masp({\mathcal A})$ 
denote the maximum size of enclosed supertrees  of ${\mathcal A}$. 
We use a similar approach as Section~\ref{mcsp}, i.e., 
we compute $\masp({\mathcal A})$ for all ${\mathcal A} \in {\mathcal O}$, 
and $\masp({\mathcal T})$ is the size of a maximum agreement
supertree of $\mathcal T$. 
We partition the sub-forests in $\mathcal O$ to two classes.
A sub-forest ${\mathcal A}$ is \emph{terminal} if each ${\mathcal A}^{(i)}$
is either an empty tree or a leaf. Otherwise, ${\mathcal A}$ is called 
\emph{non-terminal}. 

Notice that for terminal sub-forest, the definition of 
enclosed supertree coincides with the concept of embedded supertree 
in Definition~\ref{D:mcsp_embed} of Section~\ref{mcsp}. 
Then by Lemma~\ref{L:mcsp_terminal}, we have $\masp({\mathcal A}) 
= |\Lambda({\mathcal A})|$. (Please refer to the formula
\eqref{eq:Lambda_function}
in the paragraph preceding
Lemma~\ref{L:mcsp_terminal} for the definition of function $\Lambda$.)

\begin{definition}[Decomposition]
Let $\mathcal A$ be a sub-forest of $\mathcal T$. We say that 
sub-forests ${\mathcal B}_1, \ldots, {\mathcal B}_d$ (with $d \geq 2$)
\emph{decompose} $\mathcal A$ if for all $i \leq k$, 
either $(i)$ Exactly one of ${\mathcal B}_1^{(i)}, \ldots, {\mathcal B}_d^{(i)}$
is isomorphic to ${\mathcal A}^{(i)}$ while the others are empty trees; 
or $(ii)$ There are at least $2$ nonempty trees in 
${\mathcal B}_1^{(i)}, \ldots, {\mathcal B}_d^{(i)}$, and all those
nonempty trees are isomorphic to 
pairwise distinct subtrees attached to the root of ${\mathcal A}^{(i)}$. 
\label{D:masp_decomposition}
\end{definition}

\vskip-1cm

\begin{figure}[htbp]
	\centering
		\includegraphics[width=1.00\textwidth]{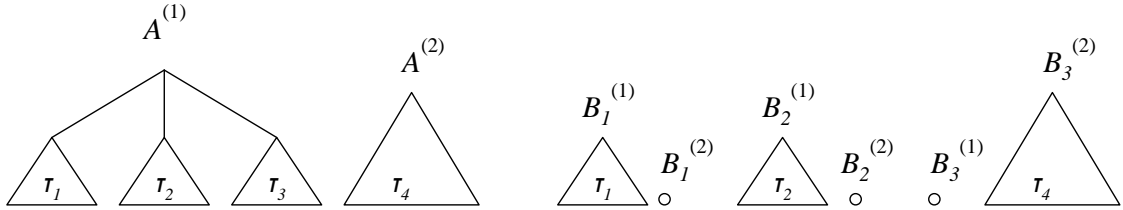}
	\caption{A decomposition $({\mathcal B}_1, {\mathcal B}_2, {\mathcal B}_3)$ of 
	a sub-forest ${\mathcal A}$. 
	The empty trees are represented by white circles.}
	\label{fig:masp_decompose_example}
\end{figure}

Figure~\ref{fig:masp_decompose_example} illustrates the 
concept of decomposition. 
For each sub-forest ${\mathcal A}$ of $\mathcal T$, we will prove that
\begin{equation}
\masp({\mathcal A}) = \max \{ \masp({\mathcal B}_1) + \ldots + \masp({\mathcal B}_d)
\mid {\mathcal B}_1, \ldots, {\mathcal B}_d \mbox{ decompose } {\mathcal A} \} \enspace .
\label{eq:masp_recurrent_formula}
\end{equation}
The identity \eqref{eq:masp_recurrent_formula} is then established
by Lemmas~\ref{L:masp_sufficient} and ~\ref{L:masp_necessary}.

\begin{lemma}
Suppose $({\mathcal B}_1, \ldots, {\mathcal B}_d)$ is  a decomposition 
of some sub-forest ${\mathcal A}$ of $\mathcal T$. Let $\tau_1, \ldots, \tau_d$
be some enclosed supertrees of ${\mathcal B}_1, \ldots, {\mathcal B}_d$ respectively, 
and let $X$ be the tree obtained by connecting $\tau_1, \ldots, \tau_d$
to a common root. Then, $X$ is an enclosed supertree of ${\mathcal A}$.  
\label{P:masp_prop_1}
\end{lemma}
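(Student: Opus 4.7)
The plan is to verify directly that $X$ satisfies the two clauses of Definition~\ref{D:masp_enclose}: the leaf-containment $L(X) \cap L(\mathcal{T}^{(i)}) \subseteq L(\mathcal{A}^{(i)})$ and the agreement equality $X|L(\mathcal{A}^{(i)}) = \mathcal{A}^{(i)}|L(X)$, for every $i \leq k$. The containment half is the easy one: since $L(X) = \bigcup_{j=1}^{d} L(\tau_j)$ and each $\tau_j$ is an enclosed supertree of $\mathcal{B}_j$, we have $L(\tau_j) \cap L(\mathcal{T}^{(i)}) \subseteq L(\mathcal{B}_j^{(i)})$, while Definition~\ref{D:masp_decomposition} gives $L(\mathcal{B}_j^{(i)}) \subseteq L(\mathcal{A}^{(i)})$. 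Taking a union over $j$ settles the inclusion.

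For the agreement equality I would fix $i$ and split on the two cases of Definition~\ref{D:masp_decomposition}. In case $(i)$, a single index $j_0$ satisfies $\mathcal{B}_{j_0}^{(i)} = \mathcal{A}^{(i)}$ (as phylogenetic trees) and all other $\mathcal{B}_j^{(i)}$ are empty; only $\tau_{j_0}$ can then contribute leaves of $L(\mathcal{T}^{(i)})$ to $X$, so restricting $X$ to $L(\mathcal{A}^{(i)})$ erases every other $\tau_j$ and suppresses the resulting degree-two root, leaving $\tau_{j_0}|L(\mathcal{A}^{(i)}) = \tau_{j_0}|L(\mathcal{B}_{j_0}^{(i)})$. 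The agreement clause of the enclosed-supertree property for $\tau_{j_0}$ rewrites this as $\mathcal{B}_{j_0}^{(i)}|L(\tau_{j_0}) = \mathcal{A}^{(i)}|L(X)$. In case $(ii)$, let $S' = \{j : L(\tau_j) \cap L(\mathcal{B}_j^{(i)}) \neq \emptyset\}$. The pairwise-distinct-subtrees hypothesis guarantees that $\{\mathcal{B}_j^{(i)}\}_{j \in S'}$ is a subfamily of the children of the root of $\mathcal{A}^{(i)}$ with no two indices selecting the same child. Each side of the target equality, after suppressing a degree-two root when $|S'| \leq 1$, then has a root whose children are $\tau_j|L(\mathcal{B}_j^{(i)})$ for $j \in S'$ on the $X$-side and $\mathcal{B}_j^{(i)}|L(\tau_j)$ for $j \in S'$ on the $\mathcal{A}$-side; these match entry by entry because each $\tau_j$ is an enclosed supertree of $\mathcal{B}_j$.

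The main obstacle is the bookkeeping around degree-two suppression. The enclosed-supertree definition only bounds $L(\tau_j) \cap L(\mathcal{T}^{(i)})$ from above and allows this intersection to be empty, so in case $(ii)$ the set $S'$ may be strictly smaller than the set of indices $j$ with $\mathcal{B}_j^{(i)}$ nonempty, and $|S'|$ can even drop to one or zero. One must therefore check carefully that the roots of both $X|L(\mathcal{A}^{(i)})$ and $\mathcal{A}^{(i)}|L(X)$ are suppressed or retained under exactly the same circumstances, so that the two restrictions agree as rooted phylogenetic trees and not merely as leaf sets.
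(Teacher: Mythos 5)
The paper defers the proof of Lemma~\ref{P:masp_prop_1} to its full version, so there is no in-paper argument to compare against; your direct verification of the two clauses of Definition~\ref{D:masp_enclose} is the natural proof and it is correct. The one point you flag as delicate does close cleanly: since $L(\tau_j)\cap L({\mathcal A}^{(i)})=L(\tau_j)\cap L({\mathcal B}_j^{(i)})$ and distinct nonempty ${\mathcal B}_j^{(i)}$ correspond to leaf-disjoint children of the root of ${\mathcal A}^{(i)}$, the root of $X|L({\mathcal A}^{(i)})$ survives exactly when $|S'|\geq 2$, which is exactly when at least two children of the root of ${\mathcal A}^{(i)}$ meet $L(X)$, i.e.\ when the root of ${\mathcal A}^{(i)}|L(X)$ survives --- so suppression occurs on both sides under identical circumstances and the entrywise matching of children via $\tau_j|L({\mathcal B}_j^{(i)})={\mathcal B}_j^{(i)}|L(\tau_j)$ finishes the case analysis as you describe.
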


\begin{lemma}
If $({\mathcal B}_1, \ldots, {\mathcal B}_d)$ is a decomposition of
a sub-forest ${\mathcal A}$ of $\mathcal T$ then
$\masp({\mathcal A}) \geq \masp({\mathcal B}_1) +  \ldots + \masp({\mathcal B}_d)$. 
\label{L:masp_sufficient}
\end{lemma}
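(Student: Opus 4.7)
The plan is to derive this lemma as an essentially immediate consequence of Lemma \ref{P:masp_prop_1}, in close analogy with the way Lemma \ref{L:mcsp_sufficient} was obtained from Lemma \ref{lem_embed_prop1} in Section \ref{mcsp}. No independent combinatorial work is needed; the entire content lives in the earlier lemma.

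Concretely, I would proceed as follows. First, for each $j = 1, \ldots, d$ pick an enclosed supertree $\tau_j$ of ${\mathcal B}_j$ realizing the maximum, so that $|L(\tau_j)| = \masp({\mathcal B}_j)$. Next, form the tree $X$ by attaching $\tau_1, \ldots, \tau_d$ as subtrees under a common new root (following the construction used in the statement of Lemma \ref{P:masp_prop_1}). Since the label sets $L(\tau_j)$ are pairwise disjoint (they are subsets of leaf sets of pairwise leaf-disjoint ${\mathcal B}_j$'s, as the decomposition splits each ${\mathcal A}^{(i)}$ into subtrees hanging from distinct positions), the size of $X$ is simply $\sum_{j=1}^{d} |L(\tau_j)| = \sum_{j=1}^{d} \masp({\mathcal B}_j)$. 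Then apply Lemma \ref{P:masp_prop_1} to conclude that $X$ is an enclosed supertree of ${\mathcal A}$, whence $\masp({\mathcal A}) \geq |L(X)| = \sum_{j=1}^{d} \masp({\mathcal B}_j)$, which is exactly what we need.

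The only mild subtlety to record, should one be pedantic, is the degenerate case where all $\tau_j$ are empty trees: here the inequality $\masp({\mathcal A}) \geq 0$ is trivial, so the construction only needs to be exhibited when at least one $\tau_j$ is nonempty, in which case $X$ is a legitimate rooted tree with a well-defined root of degree equal to the number of nonempty $\tau_j$'s (which is at least one, and we can suppress the root if only one child is attached). In short, there is no real obstacle: once Lemma \ref{P:masp_prop_1} is established, the present lemma is a direct corollary, and the proof should be essentially a three-line argument mirroring the proof of Lemma \ref{L:mcsp_sufficient}.
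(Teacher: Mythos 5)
Your proposal is correct and follows exactly the paper's own proof: choose optimal enclosed supertrees $\tau_j$ of the ${\mathcal B}_j$, join them under a common root, and invoke Lemma~\ref{P:masp_prop_1} to conclude the result is an enclosed supertree of ${\mathcal A}$ of size $\sum_j \masp({\mathcal B}_j)$. The extra remarks on disjointness of the $L(\tau_j)$ and the degenerate all-empty case are harmless elaborations of what the paper leaves implicit.
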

\begin{proof}
For each ${\mathcal B}_j$, let $\tau_j$ be an enclosed supertree 
of ${\mathcal B}_j$ such that $|L(\tau_j)| = \masp({\mathcal B}_j)$. 
Let $X$ be the tree obtained by connecting $\tau_1, \ldots, \tau_d$
to a common root. By Lemma~\ref{P:masp_prop_1}, 
$X$ is an enclosed supertree of ${\mathcal A}$. Hence 
$|L(\tau_1)| + \ldots + |L(\tau_d)| = |L(X)| \leq \masp({\mathcal A})$. 
\end{proof}

\begin{lemma}
Let  $X$ be an enclosed supertree of some sub-forest 
${\mathcal A}$ of $\mathcal T$, and let $\tau_1, \ldots, \tau_d$
be all subtrees attached to the root of $X$. 
Then either $(i)$ There is a decomposition
$({\mathcal B}_1, {\mathcal B}_2)$ of ${\mathcal A}$ such that 
$X$ is an enclosed supertree of ${\mathcal B}_1$; or    
$(ii)$ There is a decomposition $({\mathcal B}_1, \ldots, {\mathcal B}_d)$
of ${\mathcal A}$ such that each $\tau_j$ is an enclosed supertree of ${\mathcal B}_j$. 
\label{P:masp_prop_2}
\end{lemma}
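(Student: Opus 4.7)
My strategy is to construct the required decomposition directly, case-splitting on $d$, the number of subtrees at the root of $X$. I first set up the auxiliary notation $S_j^{(i)} := L(\tau_j) \cap L({\mathcal A}^{(i)})$ and $J_i := \{ j : S_j^{(i)} \neq \emptyset \}$. The pivotal structural observation, read off from the agreement equality $X \mid L({\mathcal A}^{(i)}) = {\mathcal A}^{(i)} \mid L(X)$, is this: whenever $|J_i| \geq 2$, the sets $S_j^{(i)}$ for $j \in J_i$ lie in pairwise distinct subtrees attached to the root of ${\mathcal A}^{(i)}$. The reason is that $\tau_1, \ldots, \tau_d$ partition $L(X)$, two nonempty classes $S_j^{(i)}$ and $S_{j'}^{(i)}$ are separated by the root of $X \mid L({\mathcal A}^{(i)})$, and hence by the root of the equal tree ${\mathcal A}^{(i)} \mid L(X)$, which under $|J_i| \geq 2$ is the root of ${\mathcal A}^{(i)}$ itself.

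In the main case $d \geq 2$, I aim for conclusion~(ii) and build $({\mathcal B}_1, \ldots, {\mathcal B}_d)$ one index at a time. For each $i$: if $|J_i| \geq 2$, let ${\mathcal B}_j^{(i)}$ be the root-subtree of ${\mathcal A}^{(i)}$ containing $S_j^{(i)}$ for every $j \in J_i$ and empty otherwise, matching clause~(ii) of Definition~\ref{D:masp_decomposition}; if $|J_i| \leq 1$, take $j_0$ to be the unique index in $J_i$ (or any fixed index when $J_i = \emptyset$), set ${\mathcal B}_{j_0}^{(i)} = {\mathcal A}^{(i)}$, and leave every other ${\mathcal B}_j^{(i)}$ empty, matching clause~(i). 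Each ${\mathcal B}_j$ is a valid sub-forest because $\tau_j$ owns at least one leaf $\ell$, and the containment $L(X) \cap L({\mathcal T}^{(i')}) \subseteq L({\mathcal A}^{(i')})$ from Definition~\ref{D:masp_enclose} forces $\ell \in L({\mathcal A}^{(i')})$ for some $i'$, which in turn guarantees ${\mathcal B}_j^{(i')}$ is nonempty.

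In the remaining case $d \leq 1$, after suppressing a degenerate degree-one root if present, the tree $X$ is effectively a single leaf $l$, and I aim for conclusion~(i). Since ${\mathcal A}$ is non-terminal, some ${\mathcal A}^{(i_*)}$ has at least two root-subtrees. If $l \in L({\mathcal A}^{(i_*)})$, let $T_l$ be the root-subtree of ${\mathcal A}^{(i_*)}$ containing $l$ and $T'$ any other root-subtree, and set ${\mathcal B}_1^{(i_*)} = T_l$, ${\mathcal B}_2^{(i_*)} = T'$ via clause~(ii); if $l \notin L({\mathcal A}^{(i_*)})$, put all of ${\mathcal A}^{(i_*)}$ into ${\mathcal B}_2$ via clause~(i). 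For every other index $i$, put ${\mathcal A}^{(i)}$ wholly into ${\mathcal B}_1$ via clause~(i). By construction $l \in L({\mathcal B}_1^{(i)})$ whenever $l \in L({\mathcal A}^{(i)})$, so $X = \{l\}$ is an enclosed supertree of ${\mathcal B}_1$, yielding conclusion~(i).

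The hard part will be verifying the agreement identity $\tau_j \mid L({\mathcal B}_j^{(i)}) = {\mathcal B}_j^{(i)} \mid L(\tau_j)$ across all the sub-cases of the main case. The uniform trick I plan to use is to line up corresponding subtrees on the two sides of $X \mid L({\mathcal A}^{(i)}) = {\mathcal A}^{(i)} \mid L(X)$: when $|J_i| \geq 2$, both roots survive the restriction and the $\tau_j$-subtree matches the ${\mathcal B}_j^{(i)}$-subtree directly; when $|J_i| \leq 1$, the root of $X$ gets suppressed in the restriction because only one $\tau_j$ contributes leaves, and both sides collapse to the single equality $\tau_j \mid S_j^{(i)} = {\mathcal A}^{(i)} \mid S_j^{(i)}$, which with ${\mathcal B}_j^{(i)} = {\mathcal A}^{(i)}$ is exactly what is needed. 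Keeping careful track of this root-suppression behaviour is where the proof can easily slip.
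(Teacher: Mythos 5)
The paper defers this proof to its full version, so there is nothing to compare against line by line; judged on its own terms, your argument has a genuine gap in its ``pivotal structural observation.'' You claim that whenever $|J_i| \geq 2$ the sets $S_j^{(i)}$ lie in pairwise distinct subtrees attached to the root of ${\mathcal A}^{(i)}$, on the grounds that the root of ${\mathcal A}^{(i)} \mid L(X)$ ``is the root of ${\mathcal A}^{(i)}$ itself.'' That last step is false: the root of ${\mathcal A}^{(i)} \mid L(X)$ is the least common ancestor of $L({\mathcal A}^{(i)}) \cap L(X)$ in ${\mathcal A}^{(i)}$, which coincides with the root of ${\mathcal A}^{(i)}$ only when $L(X)$ meets at least two of the subtrees attached to that root. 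The hypothesis $|J_i| \geq 2$ controls how the common leaves distribute over the subtrees of $X$, not over the subtrees of ${\mathcal A}^{(i)}$. Concretely, take ${\mathcal A}^{(i)}$ to be the rooted tree $(((a,b),c),z)$ and $X = ((a,b),c)$: then $X$ is an enclosed supertree of ${\mathcal A}$, $d = 2$, $J_i = \{1,2\}$, yet $S_1^{(i)} = \{a,b\}$ and $S_2^{(i)} = \{c\}$ both sit inside the single root-subtree $((a,b),c)$ of ${\mathcal A}^{(i)}$. Your construction then assigns this same subtree to both ${\mathcal B}_1^{(i)}$ and ${\mathcal B}_2^{(i)}$, violating the ``pairwise distinct'' requirement in clause $(ii)$ of Definition~\ref{D:masp_decomposition}; nor can you fall back on clause $(i)$ for this index, since leaving ${\mathcal B}_2^{(i)}$ empty breaks the containment $L(\tau_2) \cap L({\mathcal T}^{(i)}) \subseteq L({\mathcal B}_2^{(i)})$ required by Definition~\ref{D:masp_enclose}.

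What is missing is the case split that explains why the lemma offers two alternatives in the first place. Alternative $(i)$ is not merely a degenerate fallback for $d \leq 1$: it is needed exactly when, for some $i$, all of $L(X) \cap L({\mathcal A}^{(i)})$ is confined to one subtree $T$ attached to the root of ${\mathcal A}^{(i)}$ (with ${\mathcal A}^{(i)}$ not a leaf). In that situation you should set ${\mathcal B}_1$ equal to ${\mathcal A}$ with ${\mathcal A}^{(i)}$ replaced by $T$, let ${\mathcal B}_2^{(i)}$ be any other root-subtree of ${\mathcal A}^{(i)}$ with all other components of ${\mathcal B}_2$ empty, and check that $X$ is an enclosed supertree of ${\mathcal B}_1$, giving conclusion $(i)$. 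Only when no such index exists --- so that for every $i$ with $|J_i| \geq 2$ the set $L(X) \cap L({\mathcal A}^{(i)})$ really does meet two distinct root-subtrees, forcing the least common ancestor to be the root --- does your observation become true and your construction for conclusion $(ii)$ go through. With that dichotomy added at the front, the rest of your argument (the $|J_i| \leq 1$ bookkeeping and the verification of the agreement identities) is essentially sound.
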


\begin{lemma}
For each non-terminal sub-forest ${\mathcal A}$ of $\mathcal T$, there 
is a decomposition $({\mathcal B}_1, \ldots, {\mathcal B}_d)$ of 
${\mathcal A}$ such that 
$\masp({\mathcal A}) \leq \masp({\mathcal B}_1) + \ldots + \masp({\mathcal B}_d)$
\label{L:masp_necessary}
\end{lemma}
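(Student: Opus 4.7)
The plan is to imitate the proof of Lemma~\ref{L:mcsp_necessary} in the MCSP section, but with bipartites replaced by decompositions and with Lemma~\ref{lem_embedded_prop2} replaced by its MASP analogue Lemma~\ref{P:masp_prop_2}. Since $\mathcal{A}$ is non-terminal, at least one of the trees $\mathcal{A}^{(i)}$ has an internal root, so meaningful decompositions of $\mathcal{A}$ exist.

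First, I would fix an enclosed supertree $X$ of $\mathcal{A}$ achieving $|L(X)| = \masp(\mathcal{A})$. Let $\tau_1, \ldots, \tau_d$ be the subtrees attached to the root of $X$ (the case where $X$ is empty or a single leaf can either be handled separately by showing $\masp(\mathcal{A})$ is trivially bounded by $\masp$ of any sub-forest in a decomposition, or folded into case $(i)$ below). Applying Lemma~\ref{P:masp_prop_2} to $X$ yields one of two structural outcomes, and I would analyze them in turn.

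In case $(i)$, there is a decomposition $(\mathcal{B}_1, \mathcal{B}_2)$ of $\mathcal{A}$ such that $X$ is itself an enclosed supertree of $\mathcal{B}_1$. Then $\masp(\mathcal{A}) = |L(X)| \leq \masp(\mathcal{B}_1) \leq \masp(\mathcal{B}_1) + \masp(\mathcal{B}_2)$, where the last inequality uses that $\masp$ values are non-negative (as they equal sizes of some enclosed supertree, or $|\Lambda(\cdot)| \geq 0$ in the terminal case). In case $(ii)$, there is a decomposition $(\mathcal{B}_1, \ldots, \mathcal{B}_d)$ such that each $\tau_j$ is an enclosed supertree of $\mathcal{B}_j$; summing leaves gives
\[
\masp(\mathcal{A}) \;=\; |L(X)| \;=\; \sum_{j=1}^{d} |L(\tau_j)| \;\leq\; \sum_{j=1}^{d} \masp(\mathcal{B}_j) \enspace .
\]
Either outcome furnishes the required decomposition, completing the argument.

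I do not expect a real obstacle here: the substantive work is carried by Lemma~\ref{P:masp_prop_2}, whose job is exactly to extract a decomposition witnessing the structure of a given enclosed supertree. The only minor care needed is in case $(i)$, where I must be sure that $\masp(\mathcal{B}_2) \geq 0$ so that adding it does not break the inequality; this follows immediately from the non-negativity of $\masp$. The rest is bookkeeping.
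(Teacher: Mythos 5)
Your proposal is correct and matches the paper's own proof essentially verbatim: both fix a maximum-size enclosed supertree $X$, invoke Lemma~\ref{P:masp_prop_2} to obtain one of the two structural cases, and conclude by the same two inequalities. The extra remark about non-negativity of $\masp$ in case $(i)$ is a harmless elaboration of a step the paper leaves implicit.
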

\begin{proof}
Let $X$ be an enclosed supertree of ${\mathcal A}$ 
such that $|L(X)| = \masp({\mathcal A})$
and let $\tau_1, \ldots, \tau_d$ be all subtrees attached to the root of $X$. 
By Lemma~\ref{P:masp_prop_2}, 
either (i) There exists a decomposition $({\mathcal B}_1, {\mathcal B}_2)$ of ${\mathcal A}$
such that $X$ is an enclosed supertree of ${\mathcal B}_1$; or 
(ii)  There is a decomposition $({\mathcal B}_1, \ldots, {\mathcal B}_d)$ of ${\mathcal A}$
such that each $\tau_j$ is an enclosed supertree of ${\mathcal B}_j$. 
In case (i), we have $|L(X)| \leq \masp({\mathcal B}_1) 
\leq \masp({\mathcal B}_1) + \masp({\mathcal B}_2)$.
On the other hand, in case (ii), we have 
$|L(X)| = |L(\tau_1)| + \ldots + |L(\tau_d)| 
\leq \masp({\mathcal B}_1) + \ldots + \masp({\mathcal B}_d).$
\end{proof}

The above discussion then leads to Theorem~\ref{T:masp_formula}. 

\begin{theorem}
For every sub-forest  ${\mathcal A}$ of $\mathcal T$, 
the value $\masp({\mathcal A})$ equals to
$$
\left\{
\begin{array}{lr}
\left| \Lambda({\mathcal A}) \right|,  \mbox{ if } {\mathcal A} \mbox{ is terminal,}\\
\max \{ \masp({\mathcal B}_1) + \ldots +  \masp({\mathcal B}_d) \mid 
{\mathcal B}_1, \ldots, {\mathcal B}_d \mbox{ decompose } {\mathcal A} \}, 
\mbox{ otherwise} \enspace .
\end{array}
\right.
$$
\label{T:masp_formula}
\end{theorem}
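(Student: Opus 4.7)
The plan is to dispose of Theorem~\ref{T:masp_formula} by direct assembly of the lemmas already proved in the section, splitting along the terminal/non-terminal dichotomy that is built into the statement.

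For the terminal case, I would invoke the observation recorded in the excerpt immediately after the definition of non-terminal sub-forest: when ${\mathcal A}$ is terminal, every ${\mathcal A}^{(i)}$ is either the empty tree or a single leaf, and in that situation the notion of enclosed supertree (Definition~\ref{D:masp_enclose}) coincides with the notion of embedded supertree (Definition~\ref{D:mcsp_embed}), because an agreement supertree and a compatible supertree of a forest of isolated leaves are the same object, and the label-set constraint $L(X) \cap L({\mathcal T}^{(i)}) \subseteq L({\mathcal A}^{(i)})$ is identical in both definitions. Consequently Lemma~\ref{L:mcsp_terminal} applies verbatim and yields $\masp({\mathcal A}) = |\Lambda({\mathcal A})|$.

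For the non-terminal case, I would establish identity~\eqref{eq:masp_recurrent_formula} by chaining the two inequalities that have already been done as separate lemmas. The direction $\masp({\mathcal A}) \geq \max\{\masp({\mathcal B}_1) + \ldots + \masp({\mathcal B}_d)\}$ comes from Lemma~\ref{L:masp_sufficient}: each decomposition furnishes an embedded supertree of ${\mathcal A}$ by gluing the optimal enclosed supertrees of the parts under a common root (via Lemma~\ref{P:masp_prop_1}), whose size is exactly the sum, so the maximum over decompositions is bounded above by $\masp({\mathcal A})$. The direction $\masp({\mathcal A}) \leq \max\{\masp({\mathcal B}_1) + \ldots + \masp({\mathcal B}_d)\}$ is Lemma~\ref{L:masp_necessary}, which guarantees the existence of at least one decomposition witnessing this inequality (such a decomposition exists because the non-terminality of ${\mathcal A}$ ensures that some ${\mathcal A}^{(i)}$ has an internal root with at least two children, so a non-trivial split is always available). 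Combining the two inequalities gives the recurrence.

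Since every step is a citation of a previously established lemma, there is essentially no obstacle. The only point that deserves a moment of care is the coincidence of enclosed and embedded supertrees on a terminal sub-forest, which lets us reuse Lemma~\ref{L:mcsp_terminal} rather than reproving the terminal base case; this justification is a short sentence but is the one place where the reader might want a sanity check before the lemmas from Section~\ref{mcsp} are allowed to serve MASP.
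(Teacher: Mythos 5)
Your proposal is correct and follows exactly the route the paper itself takes: the terminal case is handled by noting that enclosed and embedded supertrees coincide on terminal sub-forests so that Lemma~\ref{L:mcsp_terminal} applies, and the non-terminal case combines the two inequalities of Lemma~\ref{L:masp_sufficient} and Lemma~\ref{L:masp_necessary} (the paper simply states that ``the above discussion leads to'' the theorem). One cosmetic slip: in the lower-bound direction the tree obtained by gluing via Lemma~\ref{P:masp_prop_1} is an \emph{enclosed} supertree of ${\mathcal A}$, not an ``embedded'' one --- that term belongs to the MCSP section.
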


We define an ordering of the sub-forests in $\mathcal O$ as follows. 
For any sub-forests ${\mathcal A}_1, {\mathcal A}_2$ in $\mathcal O$,
we say ${\mathcal A}_1$ is smaller than  ${\mathcal A}_2$ 
if ${\mathcal A}_1^{(i)}$ is either an empty tree or 
a subtree of ${\mathcal A}_2^{(i)}$ for $i=1,2,\ldots, k$.
Our algorithm enumerates ${\mathcal A} \in {\mathcal O}$ in topologically increasing order
and computes $\masp({\mathcal A})$ based on Theorem~\ref{T:masp_formula}.
%
%
%
%
%

In Lemma~\ref{L:masp_decomposition_counting}, we bound 
the number of decompositions of each sub-forest of $\mathcal T$. 
Theorem~\ref{T:masp_complexity} states the complexity 
of the algorithm. 

\begin{lemma}
Each sub-forest of $\mathcal T$ has $O\left( (k D)^{k D + 1} \right)$
decompositions, and generating those decompositions
takes $O\left( k^2D^2 \right)$ time per decomposition. 
\label{L:masp_decomposition_counting}
\end{lemma}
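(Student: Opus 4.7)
The plan is to count decompositions by first fixing the number of parts $d$ and then counting choices independently at each tree index $i$. First I would argue $d \leq kD$: every $\mathcal{B}_j$ is required to have at least one nonempty element, while the total number of nonempty ``pieces'' available across the decomposition is bounded by $\sum_i d_i$, where $d_i$ denotes the number of top-level pieces contributed by $\mathcal{A}^{(i)}$ (the root-degree of $\mathcal{A}^{(i)}$ when it is an internal-rooted tree, $1$ when it is a leaf, $0$ when empty); since each $d_i \leq D$, this sum is at most $kD$.

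Next, for each fixed $d$ and each index $i$, I would set up a bijection between the possible contributions of $\mathcal{A}^{(i)}$ to the decomposition and the $d^{d_i}$ labeled assignments of the $d_i$ top-level subtrees of $\mathcal{A}^{(i)}$ to the $d$ parts. Case (i) of Definition~\ref{D:masp_decomposition} corresponds exactly to the $d$ assignments in which all $d_i$ subtrees are sent to the same part (reconstructing $\mathcal{A}^{(i)}$ in that single $\mathcal{B}_j$), and case (ii) corresponds to the remaining $d^{d_i} - d$ assignments in which at least two parts receive a subtree; the ``$\geq 2$ nonempty'' requirement in the definition is precisely what makes these two cases disjoint. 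Multiplying across the $k$ indices yields at most $\prod_i d^{d_i} = d^{\sum_i d_i} \leq d^{kD}$ decompositions per $d$, and summing gives
\[
\sum_{d=2}^{kD} d^{kD} \,\leq\, kD \cdot (kD)^{kD} \,=\, O\bigl((kD)^{kD+1}\bigr).
\]

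For the generation cost, I would enumerate $d$ from $2$ to $kD$ and, for each $d$, iterate over the combined assignment encoded by $\sum_i d_i \leq kD$ labels drawn from $\{1,\ldots,d\}$; bucketing the top-level subtrees into the $d$ parts and packaging them as $\mathcal{B}_1,\ldots,\mathcal{B}_d$ takes $O(kD \cdot d) = O(k^2 D^2)$ time per decomposition, with a filter that discards tuples in which some $\mathcal{B}_j$ ends up with every component empty. The main obstacle I anticipate is the case~(i)/case~(ii) bookkeeping: a naive bound of $(d + d^{d_i})^k$ per $d$ would leave a spurious $2^k$ factor and miss the target by a $2^k$ blow-up, so I would state the overlap-free bijection between case~(i) and the degenerate (all-to-one-part) instances of case~(ii) as an explicit auxiliary observation before carrying out the arithmetic.
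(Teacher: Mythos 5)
There is a genuine gap in the central counting step. Your claimed bijection between the contributions of $\mathcal{A}^{(i)}$ and the $d^{d_i}$ total assignments of its top-level subtrees to the $d$ parts does not match Definition~\ref{D:masp_decomposition}. In case~(ii) of that definition, every nonempty $\mathcal{B}_j^{(i)}$ must be isomorphic to a \emph{single} subtree attached to the root of $\mathcal{A}^{(i)}$ (and must in particular be a complete subtree of $\mathcal{T}^{(i)}$, since $\mathcal{B}_j$ is required to be a sub-forest). Consequently a part may never receive two or more child subtrees of $\mathcal{A}^{(i)}$ without receiving all of them, so most of your ``non-constant'' assignments (e.g.\ subtrees $1,2$ to part $1$ and subtree $3$ to part $2$) do not correspond to any legal decomposition. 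In the other direction, case~(ii) allows some child subtrees to be \emph{omitted} from the decomposition altogether, and such contributions have no preimage among your total assignments. The correct combinatorial object is a partial injection from the child subtrees of $\mathcal{A}^{(i)}$ into the $d$ parts; the paper counts these as $\sum_{r}\binom{\delta}{r}\frac{d!}{(d-r)!} < (d+1)^D$ per index, adds $d$ for case~(i), and multiplies to get $(d+2)^{kD}$ per value of $d$, which still sums to $O\bigl((kD)^{kD+1}\bigr)$. Your bound $d^{d_i}$ happens to coincide with the true count for $d_i\in\{2,3\}$ and to dominate it in the small cases one can check by hand, but nothing in your argument establishes that it is an upper bound in general, and the ``bijection'' you invoke to justify it is false in both directions.

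The gap also propagates to the generation procedure and hence to the algorithm's correctness: iterating over total label assignments would emit tuples $(\mathcal{B}_1,\ldots,\mathcal{B}_d)$ that are not decompositions (your proposed filter only removes all-empty parts, not parts containing an illegal union of two child subtrees), and it would fail to emit the decompositions that drop child subtrees --- and these can be exactly the ones needed to realize the maximum in Lemma~\ref{L:masp_necessary}, for instance when $d$ is smaller than the root degree of some $\mathcal{A}^{(i)}$ so that not all of its children can be placed injectively. On the positive side, your argument that $d\le kD$ for \emph{every} decomposition (each $\mathcal{B}_j$ must contain a nonempty element, and at most $\sum_i d_i\le kD$ nonempty entries are available) is clean and, if anything, more complete than the paper's remark that one may restrict attention to $d\le kD$; but the heart of the lemma needs the partial-injection count, not the total-function count.
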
	
\begin{proof}
Let ${\mathcal A}$ be a sub-forest of $\mathcal T$.
Since the maximum degree of any agreement supertree of ${\mathcal A}$
is bounded by $k D$, we consider only decompositions that 
consist of at most $k D$ elements. 
We claim that for each $d \in \{2, \ldots, k D \}$, the
sub-forest ${\mathcal A}$ owns $O\left( (d + 2)^{k D} \right)$ decompositions
$({\mathcal B}_1, \ldots, {\mathcal B}_d)$. 
Summing up those asymptotic terms gives us the specified bound. 

The key of this proof is to prove that for each $s \in \{1, \ldots, k \}$, 
the tree ${\mathcal A}^{(s)}$ contributes at most 
$(d + 1)^D + d < (d + 2)^D$ sequences 
${\mathcal B}_1^{(s)}, \ldots, {\mathcal B}_d^{(s)}$, 
and generating those sequences requires $O(d)$ time per sequence. 
We have two cases, each corresponds to a type of the above sequence. 

\textbf{Case 1:} One term in the sequence is ${\mathcal A}^{(s)}$; 
therefore the other terms are empty trees. 
Then, we can generate
this sequence by assigning ${\mathcal A}^{(s)}$ to exactly one term
and setting the rest to be empty trees. This case provides exactly $d$ 
sequences and  enumerates them in $O(d)$ time per sequence. 

\textbf{Case 2:} No term in 
the above sequence is ${\mathcal A}^{(s)}$. Consider an integer
$r \in \{0, 1, \ldots, d \}$ and assume that
the sequence consists of exactly $r$ terms that are nonempty nodes. 
Then those $r$ nonempty trees are isomorphic to pairwise distinct subtrees 
attached to the root of ${\mathcal A}^{(s)}$. 
Let $\delta$ be the degree of the root of ${\mathcal A}^{(s)}$.
We generate the sequence as follows. First we draw $r$ pairwise distinct
subtrees attached to the root of ${\mathcal A}^{(s)}$. 
Next, we select $r$ terms
in the sequence and distribute the above subtrees to them. 
Finally we set the remaining terms to be empty trees. Hence this case 
gives at most 
$$\sum_{r \leq \min\{\delta, d\}} 
\left(
\begin{array}{c}
\delta \\
r
\end{array}
\right) \frac{d!}{(d - r)!}
<           
\sum_{r = 0}^D
\left(
\begin{array}{c}
D \\
r	
\end{array}
\right) d^r = (d + 1)^D$$
sequences, and generates them in $O(d)$ time per sequence. 
\end{proof}

\begin{theorem}
A maximum agreement supertree of $k$ rooted phylogenetic trees 
can be obtained in  $O\left( (k D)^{k D + 3}(2n)^k \right)$ time. 
\label{T:masp_complexity}
\end{theorem}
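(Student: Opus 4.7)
The plan is to combine the recursion of Theorem~\ref{T:masp_formula} with the counting estimates already at hand. I would process the sub-forests in $\mathcal{O}$ in topologically increasing order (with respect to the ordering defined just before the theorem), storing $\masp({\mathcal A})$ in a table indexed by $\mathcal A$, so that when I compute the value for a non-terminal $\mathcal A$ every $\masp({\mathcal B}_j)$ appearing in \eqref{eq:masp_recurrent_formula} has already been filled in. Under the standard word-RAM assumption each cell lookup costs $O(1)$.

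First I would bound the number of cells. Since each ${\mathcal A}^{(i)}$ is either empty or a complete subtree rooted at one of the at most $n-1$ internal nodes or $n$ leaves of ${\mathcal T}^{(i)}$, there are at most $2n$ choices per coordinate, so $|\mathcal{O}| = O((2n)^k)$. Testing whether $\mathcal A$ is terminal takes $O(k)$ time, and for terminal $\mathcal A$ we invoke Lemma~\ref{L:mcsp_terminal}: computing $\Lambda(\mathcal A)$ requires at most $O(k^2)$ time. The terminal cases therefore contribute at most $O(k^2 (2n)^k)$, which is absorbed by the target bound.

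Next I would handle the non-terminal cells using Lemma~\ref{L:masp_decomposition_counting}: each non-terminal $\mathcal A$ admits $O((kD)^{kD+1})$ decompositions, and each decomposition can be generated in $O(k^2 D^2)$ time. For each generated decomposition $({\mathcal B}_1,\ldots,{\mathcal B}_d)$ with $d \le kD$, summing the $d$ already-computed values $\masp({\mathcal B}_j)$ costs $O(kD)$, which is dominated by the generation cost. Hence one non-terminal cell is processed in
\[
O\bigl((kD)^{kD+1}\bigr) \cdot O(k^2 D^2) \;=\; O\bigl((kD)^{kD+3}\bigr)
\]
time, and multiplying by the number of sub-forests yields the claimed $O\bigl((kD)^{kD+3}(2n)^k\bigr)$ running time. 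Correctness of the computed value $\masp({\mathcal T})$ follows directly from Theorem~\ref{T:masp_formula}; the witnessing tree itself is recovered by standard backtracking through the table in the same asymptotic time.

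The main obstacle is purely organizational rather than combinatorial: one must be sure that every decomposition $({\mathcal B}_1,\ldots,{\mathcal B}_d)$ produced for $\mathcal A$ has strictly smaller entries than $\mathcal A$ in the order defined above, so that the dynamic-programming table is always consulted at already-filled indices. This is immediate from Definition~\ref{D:masp_decomposition}, since every nonempty ${\mathcal B}_j^{(i)}$ is either ${\mathcal A}^{(i)}$ itself (in which case the other ${\mathcal B}_\ell^{(i)}$ are empty) or a proper subtree attached to the root of ${\mathcal A}^{(i)}$; combined with the requirement $d \ge 2$ this guarantees that each ${\mathcal B}_j$ is strictly smaller than $\mathcal A$ in the componentwise ordering, justifying the enumeration order and closing the argument.
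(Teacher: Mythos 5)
Your proposal is correct and follows essentially the same route as the paper: count $O((2n)^k)$ sub-forests, charge $O(k^2)$ to each terminal one via $\Lambda$, and charge $O((kD)^{kD+1})\cdot O(k^2D^2) = O((kD)^{kD+3})$ to each non-terminal one via Lemma~\ref{L:masp_decomposition_counting}. Your extra check that every $\mathcal{B}_j$ in a decomposition is strictly smaller than $\mathcal{A}$ (so the table is filled in a valid order) is a detail the paper leaves implicit, but it does not change the argument.
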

\begin{proof}
Testing if a sub-forest is terminal takes $O(k)$ times, 
and each terminal sub-forest  ${\mathcal A}$ then requires $O(k^2)$ time 
for computing $\Lambda({\mathcal A})$. By Lemma~\ref{L:masp_decomposition_counting}, 
each non-terminal sub-forest requires $O\left( (k D)^{k D + 3} \right)$ 
running time. Summing up those asymptotic terms for $O\left( (2n)^k \right)$
sub-forests of $\mathcal T$ gives us the specified time bound. 
\end{proof}

\section{Algorithm for MASP of unrooted trees}\label{masp_unroot}

Let ${\mathcal T}$ be a set of $k$ unrooted
phylogenetic trees. This section extends 
the algorithm in Section~\ref{masp}
to find the size of a maximum agreement supertree of $\mathcal T$
in $O\left( (k D)^{k D + 3} (4n)^k \right)$ time. 
The maximum agreement supertree can be obtained by backtracking. 

We say that a set of $k$ rooted trees 
${\mathcal F} = \{ {\mathcal F}^{(1)}, \ldots, {\mathcal F}^{(k)} \}$
is a \emph{rooted variant} of $\mathcal T$ if we can obtain 
each ${\mathcal F}^{(i)}$ by rooting ${\mathcal T}^{(i)}$ 
at some internal node. One naive approach is to use the 
algorithm in the previous section to 
solve MASP for each rooted variant of $\mathcal T$. 
Each rooted variant then gives us a solution, and the maximum of those solutions
is the size of a maximum agreement supertree of $\mathcal T$. 
Because there are $O\left(n^k\right)$ rooted variants of $\mathcal T$, 
this approach adds an $O\left(n^k \right)$ factor to the 
complexity of the algorithm for rooted trees. 

We now show how to improve the above naive algorithm.
As mentioned in the previous section, 
the computation of each rooted variant of $\mathcal T$ consists of
$O\left((2n)^k\right)$ sub-problems which correspond to its sub-forests.
(Please refer to Definition~\ref{D:masp_sub-forest} for 
the concept of sub-forest.)
Since different rooted variants 
may have some common sub-forests, the total number of 
sub-problems we have to run is much smaller than $O(2^k n^{2k})$. 
More precisely, we will show that the total number of
sub-problems is only $O\left((4n)^k\right)$.

A (rooted or unrooted) tree is \emph{trivial} 
if it is a leaf or an empty tree. 
A \emph{maximal subtree} of an unrooted tree $T$ is a rooted tree 
obtained by first rooting $T$ at some internal node $v$
and then removing at most one nontrivial subtree attached to $v$.
Let $\mathcal O$ denote the set of sub-forests of all rooted variants of $\mathcal T$.

\begin{lemma}
Let ${\mathcal A} = \{ {\mathcal A}^{(1)}, \ldots, {\mathcal A}^{(k)} \}$ 
be a set of rooted trees. 
Then ${\mathcal A} \in {\mathcal O}$ if and only if 
each ${\mathcal A}^{(i)}$ is either a trivial subtree 
or a maximal subtree of ${\mathcal T}^{(i)}$. 
\label{L:unroot_masp_characteristic}
\end{lemma}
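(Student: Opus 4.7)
The plan is to establish both implications by reinterpreting each ${\mathcal A}^{(i)}$ through two complementary descriptions: as a complete rooted subtree of some rooting of ${\mathcal T}^{(i)}$, and as a rooting of ${\mathcal T}^{(i)}$ with at most one nontrivial subtree pruned away. Because a rooted variant is specified by independently choosing a rooting node $v_i$ for each ${\mathcal T}^{(i)}$, the arguments can be done one index at a time; the ``at least one non-empty'' requirement from the sub-forest definition is then automatic on both sides.

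For the ``only if'' direction, I would fix a rooted variant ${\mathcal F}$ that realizes ${\mathcal A}$ as a sub-forest, with ${\mathcal T}^{(i)}$ rooted at an internal node $v_i$. If ${\mathcal A}^{(i)}$ is empty or a single leaf it is trivial; otherwise ${\mathcal A}^{(i)}$ is the complete subtree of ${\mathcal F}^{(i)}$ rooted at some internal node $u$ of ${\mathcal T}^{(i)}$. I would then re-root ${\mathcal T}^{(i)}$ at $u$ and note that the complement of ${\mathcal A}^{(i)}$ is a single subtree hanging from $u$ toward $v_i$: empty when $u = v_i$, and otherwise rooted at the neighbor $w$ of $u$ on the $u$-to-$v_i$ path. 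Because $w$ lies strictly between $u$ and $v_i$, it has degree at least $2$ in ${\mathcal T}^{(i)}$, so the removed subtree is nontrivial, matching the definition of a maximal subtree.

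For the ``if'' direction, I would assume each ${\mathcal A}^{(i)}$ is trivial or maximal and pick $v_i$ per index. If ${\mathcal A}^{(i)}$ is trivial, any internal node of ${\mathcal T}^{(i)}$ works. If ${\mathcal A}^{(i)}$ is maximal with no subtree pruned, so that ${\mathcal A}^{(i)}$ is just ${\mathcal T}^{(i)}$ rooted at some internal node $u$, I set $v_i = u$. If ${\mathcal A}^{(i)}$ is maximal obtained by rooting at $u$ and pruning a nontrivial subtree at a neighbor $w$ of $u$, I set $v_i = w$; the nontriviality forces $w$ to have degree at least $2$ in ${\mathcal T}^{(i)}$, so $w$ is a legal internal root. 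In each case ${\mathcal A}^{(i)}$ is either trivial or a complete subtree of the resulting ${\mathcal F}^{(i)}$, so ${\mathcal A}$ is a sub-forest of ${\mathcal F}$.

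The principal obstacle I anticipate is keeping the two descriptions of a maximal subtree in tight correspondence at the eligibility conditions on the chosen root. On the forward side this amounts to using the internal-node assumption on $v_i$ to argue that the pruned subtree is nontrivial; on the reverse side it amounts to using the nontriviality of the pruned subtree to argue that $w$ is internal. Once those two checks are in place, the rest of the bookkeeping is routine.
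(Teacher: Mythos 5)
Your proposal is correct and follows essentially the same route as the paper's proof: in the forward direction you identify the complement of a nontrivial ${\mathcal A}^{(i)}$ as the single subtree hanging off its root toward the rooting node (the paper's parent $u$ of $v$ in ${\mathcal F}^{(s)}$), and in the reverse direction you recover a rooted variant by rooting ${\mathcal T}^{(i)}$ at the root of the pruned subtree. Your extra checks that the pruned subtree is nontrivial and that the new root is internal are details the paper leaves implicit, so nothing further is needed.
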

\begin{proof}
Let $\mathcal F$ be a rooted variant of $\mathcal T$
such that ${\mathcal A}$ is a sub-forest of $\mathcal F$. 
Fix an index $s \in \{1, \ldots, k \}$ and let
$v$ be the root node of ${\mathcal A}^{(s)}$. 
Our claim is straightforward
if either ${\mathcal A}^{(s)}$ is trivial or $v$ is the root node of ${\mathcal F}^{(s)}$.
Otherwise, let $u$ be the parent of $v$ in ${\mathcal F}^{(s)}$. Hence ${\mathcal A}^{(s)}$
is the maximal subtree of ${\mathcal T}^{(s)}$ obtained
by first rooting ${\mathcal T}^{(s)}$ at $v$ and then 
removing the complete subtree rooted at $u$. 

Conversely, we construct a rooted variant $\mathcal F$
of $\mathcal T$ such that 
${\mathcal A}$ is a sub-forest of ${\mathcal F}$ as follows.
For each $i \leq k$,  
if ${\mathcal A}^{(i)}$ is trivial 
or ${\mathcal A}^{(i)}$ is a tree obtained by rooting
${\mathcal T}^{(i)}$ at some internal node
then constructing ${\mathcal F}^{(i)}$ is straightforward. 
Otherwise ${\mathcal A}^{(i)}$ is a maximal subtree of ${\mathcal T}^{(i)}$
obtained by first rooting ${\mathcal T}^{(i)}$ at some internal node
$v$ and then removing exactly one nontrivial subtree $\tau$
attached to $v$. Hence ${\mathcal F}^{(i)}$ is the tree 
obtained by rooting ${\mathcal T}^{(i)}$ at $u$, 
where $u$ is the root of $\tau$. 
\end{proof}

\begin{theorem}
We can find a maximum agreement supertree of $k$
unrooted phylogenetic trees in $O\left( (k D)^{k D + 3} (4n)^k \right)$ time. 
\label{T:unroot_masp_complexity}
\end{theorem}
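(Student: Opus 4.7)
Let $\mathcal{O}$ be the collection described by Lemma~\ref{L:unroot_masp_characteristic}. The plan is to run the dynamic program of Section~\ref{masp} directly on $\mathcal{O}$ rather than on the sub-forests of any fixed rooted variant: order $\mathcal{O}$ topologically by ``is a rooted subtree of'' componentwise, compute $\masp({\mathcal A})$ from Theorem~\ref{T:masp_formula}, and return the maximum of $\masp(\mathcal{F})$ over all rooted variants $\mathcal{F}$ of $\mathcal{T}$. Every such $\mathcal{F}$ sits in $\mathcal{O}$ (as the ``root at some internal node, remove nothing'' case of Lemma~\ref{L:unroot_masp_characteristic}), and any unrooted agreement supertree of $\mathcal{T}$ can be rooted to become a rooted agreement supertree of some rooted variant, so this maximum is the correct answer.

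The first and most delicate step is checking that $\mathcal{O}$ is closed under the decomposition of Definition~\ref{D:masp_decomposition}, so that Theorem~\ref{T:masp_formula} applies verbatim inside $\mathcal{O}$. Fix ${\mathcal A} \in \mathcal{O}$ and an index $i$. By Lemma~\ref{L:unroot_masp_characteristic}, ${\mathcal A}^{(i)}$ is either trivial or is obtained by rooting ${\mathcal T}^{(i)}$ at some internal node $v$ and possibly discarding one subtree attached to $v$. In the first alternative of the decomposition, every ${\mathcal B}_j^{(i)}$ equals ${\mathcal A}^{(i)}$ or is empty, hence lies in $\mathcal{O}$. In the second alternative, each nonempty ${\mathcal B}_j^{(i)}$ is the complete subtree of ${\mathcal A}^{(i)}$ rooted at some child $u$ of $v$; if $u$ is a leaf this is trivial, and if $u$ is internal in ${\mathcal T}^{(i)}$ then the very same tree is obtained by rerooting ${\mathcal T}^{(i)}$ at $u$ and removing the unique subtree that then contains $v$, making it a maximal subtree of ${\mathcal T}^{(i)}$ again in $\mathcal{O}$. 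I expect this closure observation to be the main obstacle: the point that a child-subtree of a maximal subtree is itself a maximal subtree (or trivial) of ${\mathcal T}^{(i)}$ is what keeps the recursion inside the small domain $\mathcal{O}$ instead of blowing it up by a factor of $O(n^k)$ rooted variants.

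Next I would bound $|\mathcal{O}|$. For each ${\mathcal T}^{(i)}$, a handshaking argument over the at most $n-1$ internal nodes shows that the number of maximal subtrees is at most $3(n-1)$: each internal node $v$ contributes $1 + (\text{number of internal neighbors of }v)$ maximal subtrees, and the internal nodes span a subtree with at most $n-2$ internal edges, so the sum of internal-neighbor counts is at most $2(n-2)$. Adding the $n+1$ trivial choices (the $n$ leaves and the empty tree) gives fewer than $4n$ candidates per tree, so $|\mathcal{O}| \leq (4n)^k$.

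Finally, the running time is bookkeeping: by Lemma~\ref{L:masp_decomposition_counting} each non-terminal ${\mathcal A} \in \mathcal{O}$ admits $O((kD)^{kD+1})$ decompositions enumerable in $O(k^2D^2)$ time apiece, matching the $O((kD)^{kD+3})$ per-sub-forest cost used in the proof of Theorem~\ref{T:masp_complexity}. Multiplying by $|\mathcal{O}| = O((4n)^k)$ yields the claimed $O\left((kD)^{kD+3}(4n)^k\right)$ bound, and the maximum agreement supertree itself is recovered by standard backtracking within the same asymptotic time.
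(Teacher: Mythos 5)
Your proof is correct and follows essentially the same route as the paper: the heart of both arguments is the bound of roughly $3n$ maximal subtrees per input tree (you count them by handshaking over internal nodes, the paper by counting directed edges between internal nodes --- the same computation), which gives $|\mathcal{O}| \leq (4n)^k$ and, combined with Lemma~\ref{L:masp_decomposition_counting}, the stated running time. Your explicit check that $\mathcal{O}$ is closed under decomposition is a welcome detail that the paper leaves implicit in its definition of $\mathcal{O}$ as the union of sub-forests over all rooted variants.
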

\begin{proof}
The key of this proof is to show that 
each  tree ${\mathcal T}^{(i)}$ 
contributes at most $(3n - 1)$ maximal subtrees.
It follows that $| {\mathcal O} | \leq (4n)^k$. 
The specified running time of our algorithm is then 
straightforward because each subproblem requires  $O\left( (k D)^{k D + 3} \right)$ 
time as given in the proof of Theorem~\ref{T:masp_complexity}.
Assume that the tree ${\mathcal T}^{(i)}$ has exactly $L$ leaves, 
with $L \leq n$. We now count the number of maximal subtrees $T$
of ${\mathcal T}^{(i)}$ in two cases. 

\textbf{Case 1:} $T$ is obtained by rooting ${\mathcal T}^{(i)}$ 
at some internal node. Hence this case provides
at most $L - 1 < n $ maximal subtrees. 

\textbf{Case 2:} $T$ is obtained by first rooting ${\mathcal T}^{(i)}$
at some internal node $v$ and then removing a nontrivial 
subtree $\tau$ attached to $v$.  
Notice that there is a one-to-one correspondence between the tree $T$
and the directed edge $(v, u)$ of ${\mathcal T}^{(i)}$, 
where $u$ is the root node of $\tau$.  There are $2L -2$ or fewer
undirected edges in ${\mathcal T}^{(i)}$ but exactly $L$ of them 
are adjacent to the leaves. Hence this case gives us at most 
$2 (2L - 2 - L) < 2n - 1$ maximal subtrees. 
\end{proof}

\vskip-0.3cm
\bibliographystyle{plain}

\end{document}